\newtheorem{theorem}{Theorem}
\newtheorem{lemma}[theorem]{Lemma}
\definecolor{greencustom}{rgb}{0,0.65,0.2}
\lstdefinestyle{myC}{
  basicstyle=\footnotesize,
  language=C++,
  tabsize=2,
  keywordstyle=\color{blue}\ttfamily,
  stringstyle=\color{red}\ttfamily,
  commentstyle=\color{gray}\ttfamily, 
  morecomment=[l][\color{magenta}]{\#},
  mathescape=true,
  numberstyle=\tiny\ttfamily,                 
  numbers=left,                    
  numbersep=5pt,
}
\newcommand{\bigo}{\mathcal{O}}
\newcommand{\kmin}{\ensuremath{k_{\min}}\xspace}
\newcommand{\kmax}{\ensuremath{k_{\max}}\xspace}
\title{Tightening I/O Lower Bounds\\through the Hourglass Dependency Pattern}
\author{
    Lionel Eyraud-Dubois\footnote{Inria, Univ. Bordeaux, CNRS, Bordeaux INP, LaBRI, UMR 5800, lionel.eyraud-dubois@inria.fr}
	\and 
	Guillaume Iooss\footnote{Univ. Grenoble Alpes, Inria, CNRS, Grenoble INP, LIG, guillaume.iooss@inria.fr}
	\and
	Julien Langou\footnote{University of Colorado Denver, USA, Centre Inria de Lyon, France, julien.langou@ucdenver.edu}
	\and
	Fabrice Rastello\footnote{Univ. Grenoble Alpes, Inria, CNRS, Grenoble INP, LIG, fabrice.rastello@inria.fr}
}
\date{April 2024}
\begin{document}



\maketitle


\begin{abstract}
When designing an algorithm, one cares about arithmetic/computational complexity, but data movement (I/O) complexity plays an increasingly important role that highly impacts performance and energy consumption.
For a given algorithm and a given I/O model, scheduling strategies such as loop tiling can reduce the required I/O down to a limit, called the I/O complexity, inherent to the algorithm itself. 

The objective of I/O complexity analysis is to compute, for a given program, its minimal I/O requirement among all valid schedules. 
We consider a sequential execution model with two memories, an infinite one, and a small one of size $S$ on which the computations retrieve and produce data.
The I/O is the number of reads and writes between the two memories.

We identify a common ``{\em hourglass pattern}'' in the dependency graphs of several common linear algebra kernels.
Using the properties of this pattern, we mathematically prove tighter lower bounds on their I/O complexity, which improves the previous state-of-the-art bound by a parametric ratio.
This proof was integrated inside the IOLB automatic lower bound derivation tool.
\end{abstract}

\textbf{Keywords:} I/O complexity, Data Movement Lower Bound

\section{Introduction}

When designing an algorithm, we usually reason about its computational complexity, to estimate the increase in its execution time, when its problem sizes increase.
However, the amount of computation is not the only factor when estimating the performance of an algorithm.
The data movement (I/O) also plays an increasingly important role in both performance and the energy consumed by an algorithm.

In order to minimize the amount of data movement of an algorithm, we modify its schedule by using program transformations, such as the tiling transformation~\cite{Irigoin88_Tiling}.
However, finding an optimal schedule is not trivial, due to the size and complexity of the optimization space.
The notion of \emph{I/O complexity} of an algorithm indicates the minimal amount of data movement required for any valid schedule of an algorithm.
It provides an indication on how far it is possible to optimize the I/O of an algorithm.

However,  due to the number of possible valid schedules, it is not possible to directly compute the I/O complexity.
Instead, we search for a lower and upper bound on the minimal amount of data movement.
To find an upper bound, one only needs to exhibit a valid schedule and compute its I/O cost.
However, finding a lower bound requires us to reason over all possible schedules, which requires a mathematical proof based on the data dependency graph of the program.

Several proof techniques exist for deriving lower bounds, such as the wavefront~\cite{Elango14} or the $K$-partitioning technique~\cite{hong-pebble}.
Depending on the shape of the dependencies, some of these techniques might work better than the others: for example, the wavefront technique is usually the most effective for stencil-like computation, while the $K$-partitioning technique is usually better for linear-algebraic computation.
Other algorithms might require a specialized proof to obtain an asymptotically tight bound (e.g., for the SYRK kernel~\cite{Beaumont22_IOcompl_syrk}).
However, there are still some kernels for which the ratio between their proven lower bound and their best upper bound is parametric. For example, such a ratio is described in Table 1 of ~\cite{Olivry_pldi20} for the kernels of the Polybench benchmark suite~\cite{polybench}.

Some of the proof techniques mentioned above have been automatized and implemented in automatic data movement lower bound derivation tools, such as IOLB~\cite{Olivry_pldi20}, so that they can be applied to any kernel given as an input.

\paragraph{Contributions} In this paper, we consider several important linear algebra kernels whose dependence graph exhibits a dependency pattern called the \emph{hourglass pattern}.
We propose a new proof technique that uses the properties of this pattern to improve the lower bound on the minimal data movement required by these kernels.

In more detail:
\begin{itemize}
    \item We define the \emph{hourglass pattern}, a pattern of the dependencies of a program, and present its properties.

    \item We provide a lower bound derivation proof, based on an adaptation of the $K$-partitioning technique, that tighten the lower bound of a program exhibiting an hourglass pattern.
    This proof has been fully automatized, inside the tool IOLB~\cite{Olivry_pldi20, Olivry_pldi21}.
    
    \item We present new data movement lower bounds for several linear algebra kernels: Modified GramSchmidt (MGS), QR Householder (GEQR2 and ORG2R in the LAPACK~\cite{githubLAPACK} library; also called A2V and V2Q), bidiagonal matrix reduction (GEBD2) and Hessenberg matrix reduction (GEHD2).
    For all of these kernels, their asymptotic bound was improved by a parametric factor, compared to the bound obtainable by the classical $K$-partitioning technique.
    Figure~\ref{fig:recap_bounds} and Figure~\ref{fig:IOLB_full_bounds} summarize the new lower bounds found for several linear algebra kernels.

    \item We also provide tiled orderings for MGS and Householder, resulting in upper bounds that asymptotically match these new lower bounds.
    This proves the optimality of the new I/O lower bounds.
\end{itemize}

\paragraph{Outline}
In Section~\ref{sec:background}, we provide some background on the I/O complexity and introduce the $K$-partitioning method for deriving a lower bound on the minimal amount of data movement of a computation.
In Section~\ref{sec:hourglass_pattern}, we present the \emph{hourglass pattern}, a pattern of dependencies whose properties can be used to improve the derived lower bound.
In Section~\ref{sec:gen_iolb_proof}, we show how to exploit the hourglass pattern to adapt the $K$-partitioning method, in order to obtain a tighter bound.
In Section~\ref{sec:new_bounds}, we list different linear algebra kernels exhibiting an hourglass pattern, and their associated improved lower bound.
Additionally, Annex~\ref{sec:annex_ub_algo} contains the tiled algorithm for two of these kernels. The data movement of these algorithms provides an upper bound to the minimal amount of data movement, which matches asymptotically their new lower bound.

\section{Background - I/O complexity and the K-partitioning method}
\label{sec:background}

In this section, we present a state-of-the-art method of proof -- the $K$-partitioning method -- which infers a lower bound on the amount of data movement needed by a computation.

\paragraph{Memory model and I/O complexity}
We consider a simple two-level memory model, composed of (1) a slow memory of unbounded size, and (2) a fast memory of size $S$.
Both memories can transmit data from one to another, as long as the constraint on the size of the small memory is satisfied.
When we perform an operation, the data used must be present in the small memory, and the data produced must be committed in the small memory.

We consider a program, which performs a collection of operations organized in \emph{statements}.
Each statement $SX$ has multiple \emph{instances} $SX[\vec{i}]$, where $\vec{i}$ is a vector of the surrounding loop indexes.

In this paper, we consider a subclass of programs called \emph{polyhedron} (or affine) \emph{programs}.
These programs are combinations of nested loops and statements such as:
(i) the loop bounds are affine constraints using the surrounding loop indexes and the program parameters (e.g., the sizes of an input array);
(ii) the array accesses are affine expression of the surrounding loop indexes and program parameters. All the programs presented in this paper, such as Figure~\ref{fig:householder_a2v}, are polyhedral programs.
Also, we will call a quantity ``parametric'' when it is a function of the parameters of the program, which are considered as symbolic constants.

A statement instance might depend on the data produced by another instance, for example when the first instance uses a value produced by the second instance.
We call this a \emph{dependency} between these two statement instances.
These dependencies impose constraints on the order of execution of the program.
This order of execution is called the \emph{schedule} or the \emph{ordering}.
Frequently, the data consumed/produced by the statements of a program are too big to fit all at once in the small memory, it it thus necessary to \emph{spill}, i.e., to transfer some data back into the slow memory and retrieve it later when needed.
However, doing so increases the amount of data movement between both memories.

Given a valid schedule for a program, the \emph{I/O cost} for this program and for this ordering is the amount of data movement required, i.e. the number of data transfers between both memories.
The \emph{I/O complexity} of a program is the minimal I/O cost that can be reached by any valid schedule.
%
This quantity is interesting, in particular in the context of an architecture where the transfer of data is the limiting factor for performance.
Knowing the minimal amount of data transfers is a good algorithmic indicator to know if it could be theoretically optimized further.
%
However, because we need to find the minimal I/O cost \emph{for all possible orderings}, the exact I/O complexity is hard to evaluate.
Instead, we rely on bounds on the I/O complexity of a program: we can provide a mathematical proof for the lower bound, and exhibit an ordering (i.e., an implementation of a program) that reaches an I/O cost and provides an upper bound.

\paragraph{CDAG and red-white pebble game}
When trying to prove a lower bound, one should decide whether redundant computation is allowed or not. 
The \emph{red-white pebble game}, a variation of the red-blue pebble game of Hong and Kung~\cite{hong-pebble}, was introduced by Olivry et al.~\cite{Olivry_pldi20} in order to model the state of the memories during the execution of a program \emph{without} recomputation, which matched the assumptions we make in this paper.

This game is played on the \emph{Computational Directed Acyclic Graph} (CDAG) of the program. This is a directed graph $G$, where
\begin{itemize}
    \item the nodes $V$ represent the computation (statement instance) of a program, and
    \item the edges represent the flow dependencies between the computations of the program.
\end{itemize}
Notice that the inputs of a program are nodes that do not have incoming edges. The outputs of a program are a subset of nodes $O \subset V$; they might have outgoing edges.

During a \emph{red-white pebble game}, red and white pebbles are placed on the nodes of a CDAG.
A \emph{white pebble} represents a computation that was performed, and a \emph{red pebble} represents a computation whose output is currently stored in the small memory.
A game follows this set of rules:
\begin{itemize}
    \item At the start, the only pebbles in the CDAG are white pebbles, placed on the inputs of the program.

    \item At most $S$ red pebbles can be simultaneously present on the nodes of the CDAG.

    \item \textbf{Spill:} a red pebble can be removed from a node.

    \item \textbf{Compute:} When a node does not have a white pebble, but all its predecessors have red pebbles, then we can place both a white and a red pebble on it.

    \item \textbf{Load:} A red pebble can be added to nodes with a white pebble.

    \item The game ends when each node has a white pebble on it.
\end{itemize}
Notice that once a white pebble is placed on a node, it cannot be removed. This prevents recomputation.
In order to compute the amount of data movements, we focus on the number of red pebbles added with the \textbf{Load} rule during a game.
This means that we only focus on the ``Load'' portion of the data movements and ignore its ``Store'' part.
The resulting bounds are still valid, and because the number of ``Load'' often dominates the number of ``Store'', their tightness should not be strongly impacted.
This assumption is identical to the one made in~\cite{Olivry_pldi20}.

\paragraph{$K$-partitioning method} The \emph{$K$-partitioning} method introduced in the seminal paper of Hong and Kunk~\cite{hong-pebble} is a proof technique that allows to derive a lower bound.

The first idea is to consider a partition of the CDAG and games that play on each set of the partition one by one.

Then, we consider the notion of \emph{$K$-bounded set}.
An \emph{inset} of a set $E$ of nodes of the CDAG, noted $InSet(E)$, is the set of data used by $E$ but not produced by a computation of $E$.
A $K$-bounded set is a set of nodes $E$ of the CDAG whose inset has a size at most $K$: $|InSet(E)| \leq K$.
This notion is interesting because an $(S+T)$-bounded set requires at least $T$ additional data to fit in the small memory (of size $S$). Thus, even if the small memory is filled with interesting data, we will need at least $T$ load operations to perform the computations of this set.
%
In addition, we assume that our $K$-bounded sets are \emph{convex}: if there is a dependency chain between two points of a $K$-bounded set $E$, then all the intermediate points must belong to $E$.

Finally, a \emph{$K$-partition} is a partition into convex $K$-bounded sets.
The idea is to consider all $K$-partitions of a CDAG and to count how many sets are in this partition.
By choosing $K=S+T$, we know that there will be at least $T$ loads per set of the partition.
Therefore, a lower bound on the number of loads is $T$ times the minimal number of sets in such a partition.

\begin{theorem}[$(S+T)$-partitioning I/O lower bound~\cite{Elango15}]
Let $S$ be the size of the small memory, and for any $T>0$ let $U$ be the maximal size of a $(S+T)$-partition. Let $V$ be the set of nodes of the CDAG of the program.
%
Then, a lower bound on the number $Q$ of data movement of the program is:
$$
 T.\left\lfloor\frac{|V|}{U}\right\rfloor\leq Q
$$
\label{thm:STpartitioning}
\end{theorem}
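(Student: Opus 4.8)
The plan is to start from an \emph{arbitrary} valid red-white pebble game that computes every node of the CDAG using $Q$ Load operations, and to distil from its execution trace a convex $(S+T)$-partition whose number of parts is governed by $Q$. Since, by definition of $U$, no part of such a partition can contain more than $U$ nodes, the number of parts is forced to be at least $|V|/U$; translating the part count back into a load count then produces the claimed inequality.

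First I would cut the linear operation trace of the game into consecutive \emph{phases} by inserting a boundary immediately after every $T$-th Load. This yields $P$ phases, the first $P-1$ containing exactly $T$ Loads and the last containing at most $T$, so that $Q\ge (P-1)T$. Let $E_j$ denote the set of nodes that acquire their white pebble (i.e. are computed) during phase $j$. Because a white pebble is never removed, every non-input node is computed exactly once, so the $E_j$ partition the computed nodes with $\sum_j|E_j|=|V|$. Convexity of each $E_j$ is immediate from time-contiguity: if $a,c\in E_j$ and $b$ lies on a dependency path from $a$ to $c$, then the topological order imposed by the dependencies squeezes the compute time of $b$ between those of $a$ and $c$, so $b\in E_j$ as well.

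The heart of the argument is to show that each $E_j$ is $(S+T)$-bounded, i.e. $|InSet(E_j)|\le S+T$. Every element of $InSet(E_j)$ is a datum consumed by some computation of the phase but produced outside it, so it must carry a red pebble at the instant it is consumed. By the rules of the game that red pebble was either already present when the phase began --- and at most $S$ red pebbles ever coexist --- or it was deposited by one of the at most $T$ Loads performed during the phase. Hence $InSet(E_j)$ injects into the union of the ($\le S$) nodes red-pebbled at the phase start and the ($\le T$) nodes loaded within the phase, giving $|InSet(E_j)|\le S+T$. The no-recomputation rule is what makes this charging well defined, since it guarantees that data produced inside $E_j$ is genuinely excluded from the inset.

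It then follows that $\{E_j\}$ is a convex $(S+T)$-partition, so $|E_j|\le U$ for all $j$ and $|V|=\sum_j|E_j|\le P\,U$, whence $P\ge\lceil|V|/U\rceil$. Combining with $Q\ge (P-1)T$ gives $Q\ge T(\lceil|V|/U\rceil-1)$, which already equals $T\lfloor|V|/U\rfloor$ whenever $U\nmid|V|$. I expect the main obstacle to be twofold: making the red-pebble charging in the $(S+T)$-boundedness step airtight (each inset element charged to exactly one red pebble, with in-phase products correctly removed), and handling the integer bookkeeping around the last, possibly incomplete, phase. The floor in the statement is precisely the safe rounding that absorbs this final partial phase; to recover it in the borderline case $U\mid|V|$ I would additionally exploit that the game starts with \emph{no} red pebbles, so the first phase is in fact only $T$-bounded, which supplies the extra slack needed to push the count up to the stated floor.
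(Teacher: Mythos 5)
Your phase-decomposition argument is the standard proof of this theorem and is essentially correct; the paper itself does not prove the statement but cites it from Elango et al., and its surrounding prose sketches exactly the intuition you make rigorous (cut the schedule after every $T$-th load, observe each segment's computed set is convex and $(S+T)$-bounded, and count segments). The two loose ends you flag yourself --- input nodes belonging to no $E_j$, and the $\lfloor |V|/U\rfloor$ versus $\lceil |V|/U\rceil - 1$ discrepancy when $U$ divides $|V|$ --- are real but cost only an additive $O(T+ |I|\cdot T/U)$ term and do not affect how the theorem is applied in the paper.
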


Then, we pick a value of $T$ (which means a value of $K$) that leads to the tightest lower bound.

To estimate the minimal number of sets in a $K$-partition, we can estimate the maximum size of a set inside this partition.
In other words, an upper bound on the size of a $K$-bounded set can be transformed into a lower bound on the amount of data movement required.

\paragraph{Upper bound on the size of a $K$-bounded set}
An upper bound on the size of a $K$-bounded set $E$ can be obtained by analyzing the dependencies of the program.
Indeed, for a polyhedral program, dependencies between its statement instances are associated with affine relations, matching the loop indices of the data producing instance with the data consuming instance.
When examining the path of affine dependencies starting from any node of $E$ to a node of the inset of $E$, we can either obtain a projection or a translation. In both cases, the image of $E$ through these affine functions $\phi$ can be mapped to geometrical borders or projections of $E$, and can be associated with parts of $InSet(E)$.
This is the key geometrical intuition that leads us to use the Brascamp-Lieb theorem.

The Brascamp-Lieb theorem is a geometrical way to bound the volume of a set by the volume of its projections, which can be bounded by $K$.
\begin{theorem}[Brascamp-Lieb theorem~\cite{Christ13}]
Let $d$ and $d_j$ be non-negative integers and $\phi_j: \mathbb{Z}^d \mapsto \mathbb{Z}^{d_j}$ be a collection of group homomorphisms for all $1 \leq j \leq m$.

If we have a collection of coefficients $s_j \in [0,1]$ such that, for any subgroup $\mathcal{H} \subset \mathbb{Z}^d$:
$$rank(\mathcal{H}) \leq \sum_{j=1}^{m} s_j \times rank(\phi_j(\mathcal{H})).$$

Then, for any non-empty finite set $E \subset \mathbb{Z}^d$:
$$|E| \leq \prod_{j=1}^{m} |\phi_j(E)|^{s_j}.$$
\label{thm:BLthm}
\end{theorem}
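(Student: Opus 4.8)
The plan is to convert the combinatorial counting inequality into an inequality between Shannon entropies, and then to deduce that entropy inequality from the rank hypothesis on subgroups.

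First I would let $X$ be a random variable drawn uniformly from the finite set $E$, so that its Shannon entropy satisfies $H(X) = \log|E|$. For each $j$, the image $\phi_j(X)$ is a random variable whose support is contained in $\phi_j(E)$; since the uniform distribution maximises entropy on a fixed support, $H(\phi_j(X)) \le \log|\phi_j(E)|$. Taking logarithms of the claimed bound, it therefore suffices to establish the entropy inequality
\[
 H(X) \le \sum_{j=1}^{m} s_j\, H(\phi_j(X)).
\]
Moreover it is cleaner to prove this for \emph{every} finitely supported $\mathbb{Z}^d$-valued random variable $X$, not merely the uniform one on $E$, since that stronger statement is exactly what the subgroup rank hypothesis controls, and the converse direction (plugging uniform laws on subgroups into the entropy inequality recovers the rank condition) explains why the hypothesis is the natural one.

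Second, and this is the technical core, I would show that the rank hypothesis forces the displayed entropy inequality for all such $X$, arguing by induction on $d$. For the inductive step I pick a rank-one primitive subgroup $\mathcal{H} = \mathbb{Z}v \subset \mathbb{Z}^d$ with quotient map $\pi : \mathbb{Z}^d \to \mathbb{Z}^d/\mathcal{H} \cong \mathbb{Z}^{d-1}$. The chain rule splits the entropy as $H(X) = H(\pi(X)) + H(X \mid \pi(X))$, separating a $(d-1)$-dimensional ``base'' contribution from a one-dimensional ``fibre'' contribution. I would then manufacture two smaller Brascamp--Lieb instances, one on the quotient $\mathbb{Z}^{d-1}$ built from the induced homomorphisms $\bar{\phi}_j$ and one capturing the fibre direction $v$, verify that each inherits a valid rank condition from the original, apply the induction hypothesis to the base term and a direct estimate to the fibre term, and recombine them using that conditioning never increases entropy. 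The submodularity of entropy together with the relation $\mathrm{rank}(\phi_j(\mathcal{H})) = \mathrm{rank}(\phi_j(\pi^{-1}(\cdot))) - (\dots)$ is what lets the rank data transfer across the reduction.

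The step I expect to be the main obstacle is precisely this reduction. One cannot simply use the rank inequality for a single subgroup and subtract, because the hypothesis gives inequalities rather than equalities; so the subgroup $\mathcal{H}$ along which one reduces must be chosen with care, typically a subgroup on which the rank condition holds with equality (a ``critical'' subgroup), and one must then check that both the quotient data and the fibre data still satisfy a valid rank condition with the same coefficients $s_j$. The degenerate increments, where $v \in \ker \phi_j$ so that $\phi_j$ already factors through $\pi$, require separate bookkeeping. An alternative that sidesteps the induction altogether would be to embed the discrete data into $\mathbb{R}^d$ and invoke the continuous Brascamp--Lieb inequality of Bennett--Carbery--Christ--Tao, but the self-contained entropy argument is more elementary and stays within $\mathbb{Z}^d$, which is the setting needed for the CDAG application.
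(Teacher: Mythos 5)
This statement is not proved in the paper at all: it is the discrete Brascamp--Lieb (H\"older--Brascamp--Lieb) inequality, imported as a black box from Christ et al.~\cite{Christ13}, so there is no in-paper proof to compare against. Your proposal is therefore an attempt to reprove a cited external theorem. The first reduction you make is sound: taking $X$ uniform on $E$ gives $H(X)=\log|E|$ and $H(\phi_j(X))\leq\log|\phi_j(E)|$, so the cardinality bound follows from the entropy inequality $H(X)\leq\sum_j s_j H(\phi_j(X))$, and this entropy formulation is indeed the route taken in the literature for such results.

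However, the proposal has a genuine gap exactly where you anticipate one, and flagging an obstacle is not the same as overcoming it. Your induction quotients by a rank-one primitive subgroup $\mathcal{H}=\mathbb{Z}v$ and needs two things simultaneously: the fibre estimate requires $\sum_{j:\,v\notin\ker\phi_j}s_j\geq 1$ (the rank condition applied to $\mathbb{Z}v$), while transferring the rank hypothesis to the quotient instance $(\bar\phi_j)$ requires $\sum_{j:\,v\notin\ker\phi_j}s_j\leq 1$, because $\operatorname{rank}(\phi_j(\pi^{-1}(\bar{\mathcal{H}})))\leq\operatorname{rank}(\bar\phi_j(\bar{\mathcal{H}}))+[v\notin\ker\phi_j]$ only yields the inherited condition after subtracting exactly $1$. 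So the reduction works only along a \emph{critical} rank-one subgroup, and you neither prove that such a subgroup exists nor handle the indecomposable case where no proper nontrivial subgroup is critical --- which is precisely the case carrying the theorem's content and is typically dispatched by a separate argument (perturbing the $s_j$ to the boundary of the feasible polytope, or a duality/matching argument over the extreme points). Without the existence of a critical subgroup, the base-plus-fibre recombination you describe does not close, so the proposal as written is an outline of a known strategy rather than a proof. For the purposes of this paper, citing \cite{Christ13} (or Bennett--Carbery--Christ--Tao for the underlying structure theory) is the intended resolution.
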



For example, if we consider a 3D set and the 3 canonical projections $\phi_{j,k}(i,j,k)=(j,k)$, $\phi_{i,k}(i,j,k)=(i,k)$ and $\phi_{i,j}(i,j,k)=(i,j)$, this theorem gives us the following inequality between the volume of $E$ and the area of its faces $\phi_x(E)$:

$$
|E| \leq |\phi_{j,k}(E)|^{1/2} \times |\phi_{i,k}(E)|^{1/2} \times |\phi_{i,j}(E)|^{1/2}.
$$

As another example, we could consider instead the projections $\phi_{i}(i,j,k) = (i)$, $\phi_{j}(i,j,k) = (j)$ and $\phi_{k}(i,j,k) = (k)$, to obtain the following inequality:
$$
|E| \leq |\phi_{i}(E)| \times |\phi_{j}(E)| \times |\phi_{k}(E)|.
$$

In the case of a $K$-bounded set, we consider the path of dependencies to automatically derive these projections $\phi \in \Phi$.
So, $\phi(E)$ can be mapped to one part of the inset of $E$, and its size is bounded by $K$.

\section{The hourglass pattern}
\label{sec:hourglass_pattern}

In this section, we describe the intuition of our core contribution.
We consider a specific pattern of dependencies, called the \emph{hourglass pattern}, that forces a convex $K$-bounded set to have a specific shape.
We can exploit this property to significantly improve the derived I/O complexity lower bound of programs that exhibit such a pattern.

In the whole section, we use the Modified Gram-Schmidt algorithm as an illustrative example, whose right-looking variant is provided in Figure~\ref{fig:mgs_rl}.
Using the automatic tool IOLB~\cite{Olivry_pldi20} to apply the $K$-partitioning method (described in Section~\ref{sec:background}) to the MGS computation results in a lower bound in $\Omega\left(\frac{MN^2}{\sqrt{S}}\right)$.
By using the hourglass pattern, we obtain a more precise lower bound:
$$
\frac{M^2N(N-1)}{8(S+M)} \leq Q(MGS) 
$$


\begin{figure}
\lstset{
  basicstyle=\footnotesize,
  xleftmargin=.07\textwidth, xrightmargin=.05\textwidth
}
\begin{lstlisting}[style=myC]
for ($k=0$; $k<N$; $k\ \inc 1$) {
  $nrm$ = $0.0$;
  for ($i=0$; $i<M$; $i\ \inc 1$)
    $nrm$ += $A[i][k]$ * $A[i][k]$;
  $R[k][k]$ = sqrt$(nrm)$;
  
  for ($i = 0$; $i < M$; $i\ \inc 1$)
    $Q[i][k]$ = $A[i][k]$ / $R[k][k]$;
  
  for ($j = k + 1$; $j < N$; $j\ \inc 1$) {
    $R[k][j]$ = $0.0$;
    for ($i = 0$; $i < M$; $i\ \inc 1$)
SR:    $R[k][j]$ += $Q[i][k]$ * $A[i][j]$;
    for ($i = 0$; $i < M$; $i\ \inc 1$)
SU:    $A[i][j]$ = $A[i][j]$ - $Q[i][k]$ * $R[k][j]$;
  }
}
\end{lstlisting}
\caption{Modified Gram-Schmidt - Right-Looking (from Polybench~\cite{polybench}). The input matrix $A$ is of size $M \times N$, and the output of the algorithm are matrices $Q$ (the orthonormalized column vector basis) and $R$ such that $A=QR$. The usual right-looking Gram-Schmidt reuses the matrix $A$, instead of defining a new matrix $Q$.
$SR$ and $SU$ are labels of two statements, updating $R$ and $A$.}
\label{fig:mgs_rl}
\end{figure}

\subsection{Intuition of the hourglass pattern}
\label{subsec:hourglass_intuition}

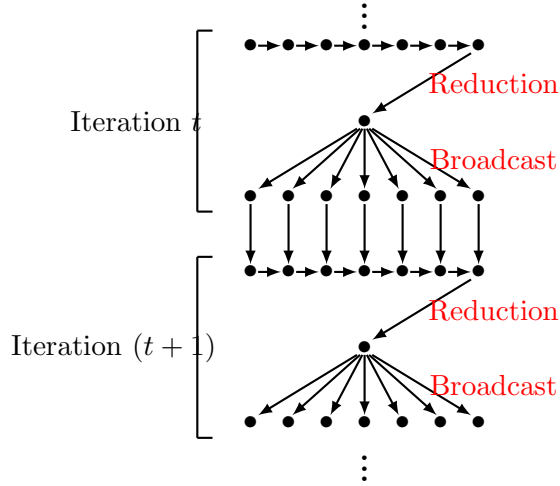
\begin{figure}
\begin{center}
\begin{tikzpicture}
    \node at (1.5,0.5) {\textbf{\vdots}};

    \node at (0,0) {$\bullet$};
    \node at (0.5,0) {$\bullet$};
    \node at (1,0) {$\bullet$};
    \node at (1.5,0) {$\bullet$};
    \node at (2,0) {$\bullet$};
    \node at (2.5,0) {$\bullet$};
    \node at (3,0) {$\bullet$};

    \draw[-latex,thick] (0.1,0) -- (0.4,0);
    \draw[-latex,thick] (0.6,0) -- (0.9,0);
    \draw[-latex,thick] (1.1,0) -- (1.4,0);
    \draw[-latex,thick] (1.6,0) -- (1.9,0);
    \draw[-latex,thick] (2.1,0) -- (2.4,0);
    \draw[-latex,thick] (2.6,0) -- (2.9,0);
    \draw[-latex,thick] (2.9,-0.1) -- (1.6,-0.9);

    \node[red] at (3.2,-0.5) {Reduction};
    
    \node at (1.5,-1) {$\bullet$};

    \draw[latex-,thick] (0.1,-1.9) -- (1.4,-1.1);
    \draw[latex-,thick] (0.55,-1.9) -- (1.44,-1.1);
    \draw[latex-,thick] (1.05,-1.9) -- (1.47,-1.1);
    \draw[latex-,thick] (1.5,-1.9) -- (1.5,-1.1);
    \draw[latex-,thick] (1.95,-1.9) -- (1.53,-1.1);
    \draw[latex-,thick] (2.45,-1.9) -- (1.56,-1.1);
    \draw[latex-,thick] (2.9,-1.9) -- (1.6,-1.1);

    \node[red] at (3.2,-1.5) {Broadcast};
    
    \node at (0,-2) {$\bullet$};
    \node at (0.5,-2) {$\bullet$};
    \node at (1,-2) {$\bullet$};
    \node at (1.5,-2) {$\bullet$};
    \node at (2,-2) {$\bullet$};
    \node at (2.5,-2) {$\bullet$};
    \node at (3,-2) {$\bullet$};

    \node at (-1.5,-1) {Iteration $t$};
    \draw[thick] (-0.7,0.2) -- (-0.7,-2.2);
    \draw[thick] (-0.7,0.2) -- (-0.5,0.2);
    \draw[thick] (-0.7,-2.2) -- (-0.5,-2.2);

    \draw[-latex,thick] (0,-2.1) -- (0,-2.9);
    \draw[-latex,thick] (0.5,-2.1) -- (0.5,-2.9);
    \draw[-latex,thick] (1,-2.1) -- (1,-2.9);
    \draw[-latex,thick] (1.5,-2.1) -- (1.5,-2.9);
    \draw[-latex,thick] (2,-2.1) -- (2,-2.9);
    \draw[-latex,thick] (2.5,-2.1) -- (2.5,-2.9);
    \draw[-latex,thick] (3,-2.1) -- (3,-2.9);

    \node at (0,-3) {$\bullet$};
    \node at (0.5,-3) {$\bullet$};
    \node at (1,-3) {$\bullet$};
    \node at (1.5,-3) {$\bullet$};
    \node at (2,-3) {$\bullet$};
    \node at (2.5,-3) {$\bullet$};
    \node at (3,-3) {$\bullet$};

    \draw[-latex,thick] (0.1,-3) -- (0.4,-3);
    \draw[-latex,thick] (0.6,-3) -- (0.9,-3);
    \draw[-latex,thick] (1.1,-3) -- (1.4,-3);
    \draw[-latex,thick] (1.6,-3) -- (1.9,-3);
    \draw[-latex,thick] (2.1,-3) -- (2.4,-3);
    \draw[-latex,thick] (2.6,-3) -- (2.9,-3);
    \draw[-latex,thick] (2.9,-3.1) -- (1.6,-3.9);

    \node[red] at (3.2,-3.5) {Reduction};
    
    \node at (1.5,-4) {$\bullet$};

    \draw[latex-,thick] (0.1,-4.9) -- (1.4,-4.1);
    \draw[latex-,thick] (0.55,-4.9) -- (1.44,-4.1);
    \draw[latex-,thick] (1.05,-4.9) -- (1.47,-4.1);
    \draw[latex-,thick] (1.5,-4.9) -- (1.5,-4.1);
    \draw[latex-,thick] (1.95,-4.9) -- (1.53,-4.1);
    \draw[latex-,thick] (2.45,-4.9) -- (1.56,-4.1);
    \draw[latex-,thick] (2.9,-4.9) -- (1.6,-4.1);

    \node[red] at (3.2,-4.5) {Broadcast};
    
    \node at (0,-5) {$\bullet$};
    \node at (0.5,-5) {$\bullet$};
    \node at (1,-5) {$\bullet$};
    \node at (1.5,-5) {$\bullet$};
    \node at (2,-5) {$\bullet$};
    \node at (2.5,-5) {$\bullet$};
    \node at (3,-5) {$\bullet$};

    \node at (-1.8,-4) {Iteration $(t+1)$};
    \draw[thick] (-0.7,0.2-3) -- (-0.7,-5.2);
    \draw[thick] (-0.7,0.2-3) -- (-0.5,0.2-3);
    \draw[thick] (-0.7,-5.2) -- (-0.5,-5.2);

    \node at (1.5,-5.5) {\textbf{\vdots}};
\end{tikzpicture}
\end{center}
\caption{Shape of an hourglass pattern, inside the dependence graph. A node is an instance of a statement of the program, and an edge is a data dependency between two nodes. The $t$ dimension is an external loop surrounding the hourglass.}
\label{fig:hourglass_pattern}
\end{figure}

\paragraph{Intuition} Figure~\ref{fig:hourglass_pattern} presents the main idea of the hourglass pattern.
It is a repeating succession of reduction and broadcast statements, such that the number of elements reduced/broadcasted is parametric, thus greater than the cache size $S$.
There are 3 categories of dimensions in this pattern:
(a) the dimensions over which the reduction and the broadcast are performed (horizontal axis of Figure~\ref{fig:hourglass_pattern}),
(b) the ``temporal'' dimensions over which the hourglass pattern is repeated (vertical axis of Figure~\ref{fig:hourglass_pattern}), and
(c) the neutral dimensions that do not interact with the hourglass pattern.

\paragraph{Running example} The hourglass pattern appears on several linear algebra kernels, including the Modified Gram-Schmidt kernel (Figure~\ref{fig:mgs_rl}).
The pattern appears between the last two statements: statement $SR$ which updates $R[k][j]$, and statement $SU$ which updates $A[i][j]$.
The statement $SR$ is a reduction along the $i$ dimension and uses, in particular, all the values of $A[\cdot][j]$ produced during the previous iteration of $k$.
The statement $SU$ broadcasts $R[k][j]$ across the $i$ dimension to update all the $A[\cdot][j]$ of the current iteration of $k$.
Therefore, dimension $k$ is a temporal dimension, dimension $i$ is the reduction/broadcast dimension and dimension $j$ is a neutral dimension. There is exactly one dimension in each category in this example, but in general, there might be several.

\paragraph{Consequences of the hourglass pattern}
When considering a $K$-bounded set over this pattern, we notice that if it spans over several iterations of the temporal dimension $t$, then the set \textbf{must} include all the nodes of the broadcast/reduction in between, due to the convexity property of the set.
Therefore, we have two situations:
\begin{itemize}
    \item Either the $K$-bounded set spans over several iterations of $t$ and includes all the nodes over the reduction/broadcast dimension. Notice that this is not always possible, depending on the size of the broadcast/reduction dimension and the value of $K$.
    \item Or, the $K$-bounded set is ``flat'' along the $t$ dimension.
\end{itemize}
For both situations, we can deduce much stronger constraints on the sizes of the projection $|\phi(E)|$ used in the Brascamp-Lieb theorem (Theorem~\ref{thm:BLthm}).
This provides the derivation of an improved lower bound compared to the classical methodology.

\subsection{Hourglass pattern - formal definition}
\label{subsec:hourglass_def}

\begin{figure}
\lstset{
  basicstyle=\footnotesize,
  xleftmargin=.07\textwidth, xrightmargin=.05\textwidth
}
\begin{lstlisting}[style=myC]
for($k = 0$; $k < N$; $k\ \inc 1$){
   $norma2$ = $0.0$;
   for($i = k+1$; $i < M$; $i\ \inc 1$){
      $norma2$ += $A[i][k]$ * $A[i][k]$;
   }
   $norma$ = $sqrt(A[k][k]$ * $A[k][k]$ + $norma2)$;
   $A[k][k]$=($A[k][k] > 0$)?
         ($A[k][k]$+$norma$):($A[k][k]$-$norma$);
   
   $tau[k]$ = $2.0$ /
       ($1.0$ + $norma2$ / ($A[k][k]$ * $A[k][k]$));
   
   for($i = k+1$; $i < M$; $i\ \inc 1$){
      $A[i][k]$ /= $A[k][k]$;
   }
   $A[k][k]$ = ($A[k][k] > 0$)?($-norma$):($norma$);
   
   for($j = k+1$; $j < N$; $j\ \inc 1$){
      $tau[j]$ = $A[k][j]$;
      for($i = k+1$; $i < M$; $i\ \inc 1$){
 SR:      $tau[j]$ += $A[i][k]$ * $A[i][j]$;
      }
      $tau[j]$ = $tau[k]$ * $tau[j]$;
      $A[k][j]$ = $A[k][j]$ - $tau[j]$;
      for($i = k+1$; $i < M$; $i\ \inc 1$){
 SU:      $A[i][j]$ = $A[i][j]$ - $A[i][k]$ * $tau[j]$;
      }
   }
}
\end{lstlisting}
\caption{QR Householder computation - Part A2V (LAPACK routine GEQR2).}
\label{fig:householder_a2v}
\end{figure}

In this section, we provide a formal definition of the hourglass pattern.

\paragraph{Preliminary notations}
Given an instance for a statement $SX$, we call an \emph{iteration vector} the tuple of the (integral) values of its surrounding loop indices.
The \emph{iteration domain $\mathcal{D}_{SX}$ of the statement $SX$} is the set of iteration vectors that respect the conditions on the indices of the surrounding loops.

As mentioned above, in general, there might be several reduction or temporal dimensions. Thus, we will consider sets of dimensions, and represent their iteration as a vector $\vec{i} = (i_1, i_2, \dots)$.
Given an iteration $\vv{k}$, we write  $\vv{k+1}$ to represent the next valid lexicographic value of $\vv{k}$. We extend this notation to $\vv{k+n}$ where $n$ is an integer.

\paragraph{The hourglass pattern}
Considering a statement $S$ of the CDAG of a program, the \emph{hourglass pattern} is a pattern of dependencies with the following properties:
\begin{itemize}
    \item \emph{Partitioning of the dimensions.} The dimensions of the statement $SX$ can be partitioned into 3 groups: (i) the temporal dimensions $\vec{k}$, (ii) the reduction/broadcast dimensions $\vec{i}$, and (iii) the neutral dimensions $\vec{j}$.
    For simplicity of the presentation, we assume that $\vec{k}$ are the first/outer dimensions and that $\vec{i}$ are the last/inner dimensions.
    
    \item \emph{Path in the dependence graph.} For any valid value of $\vec{i}$ and $\vec{i'}$, there is a dependency chain between the instances $SX[\vec{k},\vec{j},\vec{i}]$ and $SX[\vv{k+1}, \vec{j},$ $\vec{i'}]$.
    
    \item \emph{Large width of the hourglass.} Let us consider $W$, the number of statement instances on all the dependency chains between $SX[\vec{k},\vec{j}, \vec{i}]$ and $SX[\vv{k+2},\vec{j}, \vec{i}]$. This expression depends on the parameters of the program, and cannot be bounded by a constant value.
\end{itemize}

Notice that, to have such a chain of dependencies, it must include a reduction and a broadcast.
The dimensions $\vec{i}$ are the dimensions which are reduced on and broadcasted over.
The dimensions $\vec{k}$ are the dimensions that are incremented by a constant factor when looping along this loop.
So, once a path is found, partitioning the dimensions should be unambiguous, if this condition is also satisfied.

The automatic detection of such an hourglass pattern has been implemented inside the IOLB tool, using a polyhedral library~\cite{isl,sven-barvinok}.

\paragraph{Examples}
For the MGS computation (Figure~\ref{fig:mgs_rl}), we consider the statement $SU$ and the cycle of dependencies going through the $SR$.
We confirm that there are $2M$ statement instances inside a dependency chain between two instances $SU[k,j,i]$ and $SU[k+2,j,i]$: $SR[k+1,j,\cdot]$ and $SU[k+1,j,\cdot]$.
The same reasoning would hold if we considered the statement $SR$ instead of $SU$.

For the A2V QR Householder computation (Figure~\ref{fig:householder_a2v}), we consider the statement $SU$, and the cycle of dependencies going through the $SR$ statement.
There are $(M-k)$ statement instances $SR[k,j,\cdot]$, inside a dependency chain between two instances $SU[k,j,i]$ and $SU[k+1,j,i]$.


\section{Lower bound proof using the hourglass pattern}
\label{sec:gen_iolb_proof}

In this section, we show how to exploit the hourglass pattern of a program to derive a tighter lower bound on the data movement.
We will use the Modified Gram-Schmidt program (Figure~\ref{fig:mgs_rl}) as a running example to illustrate our proof.

Once again, this proof has been integrated inside the automatic data movement lower bound derivation tool, IOLB~\cite{Olivry_pldi20}, and is applied when an hourglass pattern is detected.

\paragraph{Preliminary notations}
Let us denote with $E$ a set of integral iteration vectors. The cardinality of such a set $|E|$ is the number of integer points inside this set.

Given a dimension $k$, $\phi_k : (i,j,k) \mapsto (k)$ is the projection to the dimension $k$. This notation can be extended to several dimensions. For example, $\phi_{j,k}: (i,j,k) \mapsto (j,k)$.

Given a set $E$, a dimension $k$ and a value $k_0$, $E_{k=k_0}$ is a slice of $E$, i.e., the set of points of $E$ whose value along the dimension $k$ is $k_0$: $E_{k=k_0} = \{ (i,j,k) \in E ~|~ k=k_0 \}$.
We extend this notation to several dimensions, as in $E_{k=k_0,j=j_0}$.
In the rest of the paper, we use a compact notation for slices, by omitting the dimension when it is not ambiguous, like $E_{k_0,j_0}$.

\paragraph{Starting point and intuition}
The goal of the proof is to find an upper bound on the size of a $K$-bounded set, to transform it into a lower bound on the data volume required by a portion of the program.
So, we start the proof with an arbitrary convex $K$-bounded set $E$ containing instances of the broadcast statement $SX[\vec{k}, \vec{j}, \vec{i}]$, which we assume is part of an hourglass pattern. Just like in the classical proof, an analysis of the dependencies of the program yields a set $\Phi$ of projections $\phi_{\vec{x}}$ for some dimensions $\vec{x}$, which project on the inset of $E$.

The classical proof involves applying the Brascamp-Lieb theorem on the $K$-bounded set $E$, by bounding the size $|\phi_{\vec{x}}(E)|$ of each of these projections by $K$.

In our proof, we split $E$ into two parts (Section~\ref{subsec:gen_iolb_proof_part1}) and we adapt the set of projections when we apply the Brascamp-Lieb theorem on each of these parts (Sections~\ref{subsec:gen_iolb_proof_part2} and~\ref{subsec:gen_iolb_proof_part3}) to obtain more precise bounds.

\paragraph{Running example} In the case of MGS, by following the chain of dependencies associated with the accesses $A[i][j]$, $Q[i][k]$ and $R[k][j]$ of statement $SU$, we infer that the projections are $\phi_{i,j}$, $\phi_{i,k}$ and $\phi_{k,j}$.
When following the classical proof, the application of the Brascamp-Lieb theorem results in the following inequality:
$$
|E| \leq |\phi_{i,j}(E)|^{\frac{1}{2}} . |\phi_{i,k}(E)|^{\frac{1}{2}} . |\phi_{k,j}(E)|^{\frac{1}{2}} \leq K^{3/2}
$$
We prove in Section~\ref{subsec:gen_iolb_proof_part4} a tighter upper bound: $|E| \leq \frac{K^2}{M} + 2K$.

\subsection{Part 1 - Decomposition of the K-bounded set}
\label{subsec:gen_iolb_proof_part1}

In the first part of the proof, we decompose $E$ into the union of two fragments: (i) $I'$ which has volume along the $\vec{k}$ dimensions, and (ii) $F$ which is flat along the $\vec{k}$ dimensions.
Both parts have their own upper bound, obtained with different reasoning, that is described in Section~\ref{subsec:gen_iolb_proof_part2} and Section~\ref{subsec:gen_iolb_proof_part3}.

\paragraph{Decomposition of $E$}
We consider the number of different values of $\vec{k}$ for a given value of $\vec{j}$:
$$
Tick_{\vec{j}} = \{\vec{k} \mid \exists \vec{i}, ~ SX[\vec{k},\vec{j},\vec{i}] \in E_{\vec{j}} \}.
$$
We split $E$ into two sets:
\begin{itemize}
    \item $E' = \cup_{\vec{j} \in J_{3+}} E_j$ where $J_{3+}=\{\vec{j} \mid 3\leq Tick_{\vec{j}}  \}$, and
    \item $E'' = \cup_{\vec{j} \in J_{12}} E_j$ where $J_{12}=\{\vec{j} \mid 1 \leq Tick_{\vec{j}} \leq 2 \}$.
\end{itemize}

The intuition between this separation is that (i) $E'$ contains the connected components which need to include an entire line of statement instances along the $\vec{i}$ dimensions, and (ii) $E''$ contains the connected components which are ``flat'' along the $\vec{k}$ dimensions.

We start by focusing on $E'$, to show that these components need to include a parametric number of iterations along the $\vec{i}$ dimension, using the hourglass pattern.

\begin{lemma}[Structure of $E'$]
    Given some $\vec{j} \in \phi_{\vec{j}}(E')$, let us consider the slice $E'_{\vec{j}}$ along the dimensions $\vec{j}$.
    Let us consider:
    \begin{itemize}
        \item $\vv{\kmin}(\vec{j}) = min_{\vec{k}} \{ \vec{k} \mid SX[\vec{k}, \vec{j}, \vec{i}] \in E'_{\vec{j}} \}$,
        \item $\vv{\kmax}(\vec{j}) = max_{\vec{k}} \{ \vec{k} \mid SX[\vec{k}, \vec{j}, \vec{i}] \in E'_{\vec{j}} \}$,
        \item $\vec{a}$, any subset of the indices of $\vec{i}$,
        \item $|\phi_{\vec{a}}(\mathcal{D}_S)| \geq W_{\vec{a}}$,
        a lower bound on the size of the projection of $\mathcal{D}_S$, the iteration domain of the statement $SX$.
    \end{itemize}

    Then:
    \begin{enumerate}
        \item $E'_{\vec{j}}$ is a connected component.
        \item For all $\vv{\kmin}(\vec{j}) <_{lex} \vec{k} <_{lex} \vv{\kmax}(\vec{j})$, $|\phi_{\vec{a}}(E'_{\vec{j}, \vec{k}})| \geq W_{\vec{a}}$.
    \end{enumerate}
    \label{lemma1_gen}
\end{lemma}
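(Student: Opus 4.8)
The plan is to derive both conclusions from one structural fact: under the hourglass hypotheses and the convexity of the $K$-bounded set, every temporal level strictly interior to the slice $E'_{\vec{j}}$ must be \emph{completely filled} along the reduction/broadcast dimensions $\vec{i}$. First I would fix $\vec{j} \in \phi_{\vec{j}}(E')$ and choose witnesses $p_{\min}=SX[\vv{\kmin}(\vec{j}),\vec{j},\vec{i}_{\min}]$ and $p_{\max}=SX[\vv{\kmax}(\vec{j}),\vec{j},\vec{i}_{\max}]$ in $E'_{\vec{j}}$ attaining the extreme temporal values. Since $\vec{j}\in J_{3+}$, there are at least three distinct values of $\vec{k}$ in the slice, so $\vv{\kmin}(\vec{j})<_{lex}\vv{\kmax}(\vec{j})$ and at least one $\vec{k}$ lies strictly in between.

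The core step is to show that for every $\vec{k}$ with $\vv{\kmin}(\vec{j})<_{lex}\vec{k}<_{lex}\vv{\kmax}(\vec{j})$ and every admissible $\vec{i}$, the node $SX[\vec{k},\vec{j},\vec{i}]$ sits on a single dependency chain running from $p_{\min}$ to $p_{\max}$. I would assemble this chain by concatenating the per-step chains guaranteed by the \emph{path in the dependence graph} property: climbing through the lexicographic successors from $\vv{\kmin}(\vec{j})$, I pick the broadcast target of the step entering level $\vec{k}$ to be exactly $\vec{i}$ (so the chain visits $SX[\vec{k},\vec{j},\vec{i}]$), and I pick the broadcast target of the final step entering level $\vv{\kmax}(\vec{j})$ to be $\vec{i}_{\max}$ (so the chain ends at $p_{\max}$). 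As both endpoints lie in $E$, convexity forces every node of the chain, and in particular $SX[\vec{k},\vec{j},\vec{i}]$, into $E'_{\vec{j}}$. Letting $\vec{i}$ range over all admissible values shows that $E'_{\vec{j},\vec{k}}$ is the entire iteration-domain layer at $(\vec{k},\vec{j})$; projecting on $\vec{a}$ then gives $|\phi_{\vec{a}}(E'_{\vec{j},\vec{k}})|\geq W_{\vec{a}}$, which is conclusion~2.

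For conclusion~1, I would reuse the filled-layer picture. Every temporal level between $\vv{\kmin}(\vec{j})$ and $\vv{\kmax}(\vec{j})$ is now known to be entirely contained in $E'_{\vec{j}}$, and two consecutive filled layers are linked through the reduction-then-broadcast neck of the hourglass, whose nodes convexity again places in $E$; thus the union of all interior layers forms one connected piece. Finally, any point of $E'_{\vec{j}}$ lying on the extreme levels $\vv{\kmin}(\vec{j})$ or $\vv{\kmax}(\vec{j})$ attaches to the adjacent interior layer by one further hourglass chain, so $E'_{\vec{j}}$ is a single connected component.

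I expect the main difficulty to be the bookkeeping of the chain construction over multidimensional, lexicographically ordered temporal indices: the successor $\vv{k+1}$ may wrap across inner temporal loops, so I must check that the hourglass path property composes along every such step and that the target node genuinely lies strictly between the two $E$-endpoints, so that convexity is applicable. A secondary point to settle is the reading of $W_{\vec{a}}$ in conclusion~2: the argument yields the projection of a \emph{single} interior layer, so $W_{\vec{a}}$ must be taken as a lower bound valid layer-by-layer rather than merely for the projection of the whole domain $\mathcal{D}_S$.
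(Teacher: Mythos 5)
Your proposal is correct and follows essentially the same route as the paper's proof: both use the hourglass path property to place every interior temporal layer of $E'_{\vec{j}}$ on a dependency chain between the extremal instances at $\vv{\kmin}(\vec{j})$ and $\vv{\kmax}(\vec{j})$, invoke convexity to force those layers entirely into the set, and then project onto $\vec{a}$ to get the $W_{\vec{a}}$ bound (with connectivity following from the same chains). Your closing caveat that $W_{\vec{a}}$ must be a layer-by-layer lower bound is a fair reading of the paper's slightly loose inequality $|\phi_{\vec{a}}(E'_{\vec{j},\vec{k}})|\geq|\phi_{\vec{a}}(\mathcal{D}_S)|$, and is indeed how the authors instantiate $W$ for the Householder kernels (taking the minimum layer width $M-N$).
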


\begin{proof}
    (1) Because $SX$ satisfies the hourglass pattern properties, for any two instances $SX[\vec{k},\vec{j}, \vec{i}]$ and $SX[\vec{k'},\vec{j}, \vec{i'}]$ in $E'_{\vec{j}}$ where $\vec{k}<_{lex}\vec{k'}$, we can prove that there is a chain of dependencies from one statement instance to another.
    Therefore, using the convexity property of $E$, we conclude that $E'_j$ is a connected component.
    
    (2) By considering one instance of index $\vec{k} = \vv{\kmin}$ and another of index $\vec{k} = \vv{\kmax}$, we can show that all the $SX[\vec{i}, \vec{j}, \vec{k}]$ are in the middle of a dependency chain between the two of them.
    Therefore, by projecting the $\vec{i}$ on the subset of dimension $\vec{a}$, we conclude:
    $$
    \forall \vv{\kmin}(\vec{j}) <_{lex} \vec{k} <_{lex} \vv{\kmax}(\vec{j}),~ |\phi_{\vec{a}}(E'_{\vec{j}, \vec{k}})| \geq |\phi_{\vec{a}}(\mathcal{D}_S)| \geq W_{\vec{a}}.
    $$
\end{proof}

We consider $E' = I' \uplus B'$ defined by:
\begin{itemize}
    \item $I' = \cup_{\vec{j}\in J_{3+}} (E_{\vec{j}} - E_{\vec{j},\vv{\kmin}(\vec{j})} - E_{\vec{j},\vv{\kmax}(\vec{j})})$: the ``inside'' of the connected components of $E'$, according to dimensions $\vec{k}$.
    \item $B' = \cup_{\vec{j}\in J_{3+}} (E_{\vec{j},\vv{\kmin}(\vec{j})} \cup E_{\vec{j},\vv{\kmax}(\vec{j})})$: the ``boundaries'' of the connected components of $E'$, according to dimensions $\vec{k}$.
\end{itemize}

We define $F = B' \uplus E''$ the flat parts of $E$, and we adapt our previous decomposition of $E$ to obtain the desired decomposition:
$$
E = E' \uplus E'' = I' \uplus F.
$$

\subsection{Part 2 - Bound on the size of I'}
\label{subsec:gen_iolb_proof_part2}

Let us focus on the upper bound of the size of $I'$, using Lemma~\ref{lemma1_gen}.
The goal is to have tighter bounds on the projections involving some of the reduce/broadcast dimensions $\vec{i}$, to use them with the Brascamp-Lieb theorem, instead of the classical ``$\leq K$'' bound.


\begin{lemma}[Bounds on the size of some of the projections of $I'$]
Let us consider a projection $\phi_{\vec{x}, \vec{a}}\in \Phi$, where $\vec{a}$ is a subset of $\vec{i}$, and $\vec{x}$ is a subset of $(\vec{j}, \vec{k})$.
Assume that we have a lower bound $|\phi_{\vec{a}}(\mathcal{D}_S)| \geq W_{\vec{a}}$ on the size of the projection of the iteration domain $\mathcal{D}_S$ of statement $S$.
Then:
$$
|\phi_{\vec{x}}(I')| \leq \frac{K}{W_{\vec{a}}}.
$$
\label{lemma2_gen}
\end{lemma}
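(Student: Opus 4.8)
The plan is to bound the finer projection $\phi_{\vec{x},\vec{a}}$ from above by $K$ and to show that every value of $\phi_{\vec{x}}(I')$ lifts, inside the image $\phi_{\vec{x},\vec{a}}(I')$, to at least $W_{\vec{a}}$ distinct points; combining these two facts yields the claim directly. Concretely, since $\phi_{\vec{x},\vec{a}}\in\Phi$ projects onto the inset of the $K$-bounded set $E$, its image has size at most $K$, and because $I'\subseteq E$ this gives $|\phi_{\vec{x},\vec{a}}(I')|\leq K$. It therefore suffices to establish the lifting inequality $|\phi_{\vec{x},\vec{a}}(I')|\geq W_{\vec{a}}\cdot|\phi_{\vec{x}}(I')|$, after which dividing by $W_{\vec{a}}$ closes the argument.

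To prove the lifting inequality, I would fix an arbitrary value $\vec{x}_0\in\phi_{\vec{x}}(I')$ and pick a witness point of $I'$ projecting to it; let $(\vec{j}_0,\vec{k}_0)$ denote its full coordinates along $(\vec{j},\vec{k})$, which agree with $\vec{x}_0$ on the $\vec{x}$-components. By the very construction of $I'$, any point it contains with parameter $\vec{j}_0\in J_{3+}$ has a $\vec{k}$-coordinate strictly between $\vv{\kmin}(\vec{j}_0)$ and $\vv{\kmax}(\vec{j}_0)$, since exactly the two boundary slices were removed. Hence $\vv{\kmin}(\vec{j}_0)<_{lex}\vec{k}_0<_{lex}\vv{\kmax}(\vec{j}_0)$, so part~(2) of Lemma~\ref{lemma1_gen} applies to the slice $E'_{\vec{j}_0,\vec{k}_0}$ and yields $|\phi_{\vec{a}}(E'_{\vec{j}_0,\vec{k}_0})|\geq W_{\vec{a}}$. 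Moreover this entire slice lies in $I'$: fixing $\vec{j}_0$ together with the strictly interior value $\vec{k}_0$ keeps us clear of the two removed boundary slices.

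The key counting step is then that the slice $E'_{\vec{j}_0,\vec{k}_0}$ supplies at least $W_{\vec{a}}$ points of $I'$ that all share the same coordinates $\vec{x}_0$ (because $\vec{x}\subseteq(\vec{j},\vec{k})$ and $\vec{j}_0,\vec{k}_0$ are fixed) while realizing $W_{\vec{a}}$ distinct values of $\vec{a}$; under $\phi_{\vec{x},\vec{a}}$ these map to $W_{\vec{a}}$ distinct pairs $(\vec{x}_0,\vec{a})$. Summing over all distinct $\vec{x}_0\in\phi_{\vec{x}}(I')$---whose contributions are disjoint, as they differ already in the $\vec{x}$-coordinate---gives $|\phi_{\vec{x},\vec{a}}(I')|\geq W_{\vec{a}}\cdot|\phi_{\vec{x}}(I')|$, as required.

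I expect the main difficulty to be bookkeeping rather than a deep obstacle. One must verify that the lifted points genuinely belong to $I'$ and not merely to $E$, so that they actually contribute to $\phi_{\vec{x},\vec{a}}(I')$; this is precisely where the strict interiority $\vv{\kmin}(\vec{j}_0)<_{lex}\vec{k}_0<_{lex}\vv{\kmax}(\vec{j}_0)$ is used, and it is the reason the boundary slices were stripped off in Part~1. A secondary point to check is that taking $\vec{a}$ to be a strict subset of the reduction dimensions $\vec{i}$ does not weaken the projection bound, which is handled by invoking Lemma~\ref{lemma1_gen} with that very subset $\vec{a}$ and the corresponding lower bound $W_{\vec{a}}$.
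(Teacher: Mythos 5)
Your proof is correct and follows essentially the same route as the paper's: both arguments combine the bound $|\phi_{\vec{x},\vec{a}}(I')|\leq K$ coming from the inset with the counting inequality $|\phi_{\vec{x},\vec{a}}(I')|\geq W_{\vec{a}}\cdot|\phi_{\vec{x}}(I')|$, the latter obtained by fixing each value in $\phi_{\vec{x}}(I')$, passing to a full $(\vec{j},\vec{k})$-slice strictly interior in $\vec{k}$, and invoking Lemma~\ref{lemma1_gen}. Your write-up is in fact slightly more explicit than the paper's about why the interior slice $E'_{\vec{j}_0,\vec{k}_0}$ lies entirely in $I'$, which is the point the paper leaves implicit.
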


\begin{proof}
    Because $I'$ is a subset of a $K$-partition, then we have $|\phi_{\vec{a},\vec{x}}(I')| \leq K$.
    
    By slicing $I'$ along the dimensions of $\vec{x}$, we also have:
    $$
        |\phi_{\vec{a},\vec{x}}(I')| = |\phi_{\vec{a},\vec{x}}( \cup_{\vv{x''}} I'_{\vv{x''}} )| 
            = \sum\limits_{\vv{x''}} |\phi_{\vec{a},\vec{x}}(I'_{\vv{x''}})|.
    $$
    
    Furthermore:
    \begin{align*}
    |\phi_{\vec{a},\vec{x}}(I'_{\vv{x''}})| & \geq |\phi_{\vec{a}} I'_{\vv{x''}})| && \text{(a slice along $\vec{x}$ only has a single}\\
        &&& \text{point to be projected along $\vec{x})$}\\
        & \geq |\phi_{\vec{a}}(I'_{\vv{j''}, \vv{k''}})| \quad &&\text{($I'_{\vv{j''}, \vv{k''}} \subset I'_{\vv{x''}}$, for some $(\vv{j''},\vv{k''})$}\\
        &&& \text{matching $\vv{x''}$ on its dimensions)} \\
        & \geq W_{\vec{a}}. && \text{(by Lemma~\ref{lemma1_gen})}
    \end{align*}
        
    Therefore: $|\phi_{\vec{a},\vec{x}}(I')| \geq W_{\vec{a}} \times |\phi_{\vec{x}}(I')|$.
    In other words:
    $$
    |\phi_{\vec{x}}(I')| \leq \frac{|\phi_{\vec{a},\vec{x}}(I')|}{W_{\vec{a}}} \leq \frac{K}{W_{\vec{a}}}.
    $$
\end{proof}

To obtain a bound on $|I'|$, we use the set of projections $\Phi$, modified in the following way:
\begin{itemize}
    \item We add a projection on $\vec{i}$ whose bound is:
    $|\phi_{\vec{i}}(I')| \leq W$, where $W$ is the width of the hourglass.
    \item When a projection on $(\vec{x}, \vec{a})$ shares some of its dimensions $\vec{a}$ with $\vec{i}$, we use the projection on $\vec{x}$ instead, and the bound given by Lemma~\ref{lemma2_gen}.
    \item The rest of the projections, which do not involve dimensions of $\vec{i}$, are unchanged and associated with a classical upper bound $|\phi_{\vec{x}}(I')| \leq K$.
\end{itemize}
Then, we apply the Brascamp-Lieb theorem, while optimizing the values of the power $s$ of the size of the projections $|\phi(E)|$.
Notice that some projections can be filtered out through this optimization process ($s=0$), to obtain a tighter bound on $I'$.

\paragraph{Running example}

We notice that $\Phi$ contains two projections $\phi_{i,j}$ and $\phi_{i,k}$,
both of them using the dimension $i$ and another dimension.
Therefore, from Lemma~\ref{lemma2_gen}, we have the following bounds:
$$
 |\phi_j(I')| \leq \frac{K}{M} \quad\text{and}\quad |\phi_k(I')| \leq \frac{K}{M}.
$$

We apply the Brascamp-Lieb theorem on $I'$, using the projections on $i$, on $j$ instead of a projection on $(i,j)$, and on $k$ instead of a projection on $(i,k)$, each with coefficient $1$:
$$
|I'| \leq |\phi_i(I')| \times |\phi_j(I')| \times |\phi_k(I')|.
$$
By using our specialized bounds, we have:
$$
|I'| \leq M \times \frac{K}{M} \times \frac{K}{M} = \frac{K^2}{M}.
$$


\subsection{Part 3 - Bound on the size of F}
\label{subsec:gen_iolb_proof_part3}

We now focus on the bound of the remaining part of $E$, i.e., $F = (B' \uplus E'')$. 
Our goal is to exploit the fact that this set is ``flat'' along the $\vec{k}$ dimensions, to improve the upper bounds on the projections used in the Brascamp-Lieb application to $F$.
In particular, our proof will consider each $F_{\vec{j}}$ separately, the slice of $F$ for a given value $\vec{j}$. Then, by picking a list of well-chosen projections for the Brascamp-Lieb theorem, we obtain interesting bounds on the size of $F_{\vec{j}}$, that are finally summed to obtain our bound on $F$.

We recall that $E'$ and $E''$ contains the $\vec{j}$-slices of $E$ for the values of $\vec{j}$ such that $Tick_{\vec{j}}$ are respectively at least $3$ and at most $2$. Since $B' \subset E'$, $B'$ and $E''$ do not share the same set of $\vec{j}$.
So we get a ``flatness bound'': $\forall \vec{j} \in \phi_{\vec{j}}(F), ~ |\phi_{\vec{k}}(F_{\vec{j}})| \leq 2$.

In addition, we notice that because $F_{\vec{j}} \subset E$, for any projection $\phi \in \Phi$, $|\phi(F_{\vec{j}})| \leq |\phi(E)| \leq K$.


\medskip

In this part of the proof, instead of focusing on $F$, we apply the Brascamp-Lieb theorem to the slice $F_{\vec{j}}$.

As in Section~\ref{subsec:gen_iolb_proof_part2}, we start with the same list of projections $\Phi$ to the in-set of $E$, obtained by inspecting the chain of dependencies of the considered statement.
Then, we customize this list of projections to exploit the properties of $F_{\vec{j}}$:
\begin{itemize}
    \item We add a projection on $\vec{k}$ and the flatness bound $|\phi_{\vec{k}}(F_{\vec{j}})| \leq 2$.
    \item We identify a projection that involves a non-empty subset of the dimensions of $\vec{j}$: after applying the Brascamp-Lieb theorem, we will not try to immediately use an upper bound for this projection. Let us call $\phi_{\vec{w}}$ this projection.
    \item The remaining projections are left alone, and associated with the classical upper bound $|\phi_{\vec{x}}(F_{\vec{j}})| \leq |\phi_{\vec{x}}(F)| \leq K$.
\end{itemize}

At that point, we have obtained an upper bound on $|F_{\vec{j}}|$ of the following form:
$$
|F_{\vec{j}}| \leq e \times |\phi_{\vec{w}}( F_{\vec{j}} )| .
$$
where $e$ is a parametric expression using the parameter $K$ and independent of the value of $\vec{j}$.
Notice that the Brascamp-Lieb power above $|\phi_{\vec{w}}|$ must be equal to $1$, due to the fact that this is the only projection involving the $\vec{j}$ dimensions.


Let us consider the collection of projected sets $\phi_{\vec{w}}(F_{\vec{j}})$.
Two of these projected sets are either (i) identical, along the dimensions of $\vec{j}$ which are not present in $\vec{w}$, or (ii) disjoint.
Because the distinct values of these projections are always a subset of the inset of $E$, of size $K$, the sum of the size of the disjoint union of these distinct values is also bounded by $K$.
Let us call $R$ the number of values that can be taken by the dimensions of $\vec{j}$ which are not in $\vec{w}$. This is also the maximum number of times a $\phi_{\vec{w}}(E)$ projects on the same value.
Notice that $R=1$ when $\vec{w}$ covers all the dimensions of $\vec{j}$, and that $R$ can be a parametric expression in general.
%
So, we have:
$$
\sum_{\vec{j} \in \phi_{\vec{j}}(F)} |\phi_{\vec{w}}( F_{\vec{j}} )| \leq R \times K.
$$

Combining all these observations, we obtain the following upper bound on the size of $F$:
\begin{align*}
|F|  &= \sum_{\vec{j} \in \phi_{\vec{j}}(F)} |F_{\vec{j}}|   \leq   \sum_{\vec{j} \in \phi_{\vec{j}}(F)} e \times |\phi_{\vec{w}}( F_{\vec{j}} )|  \\
& \leq e \times  \sum_{\vec{j} \in \phi_{\vec{j}}(F)} |\phi_{\vec{w}}( F_{\vec{j}} )| \leq e \times R \times K .
\end{align*}

\paragraph{Running example}
In the MGS case, the flatness bound yields: $\forall j, |\phi_{k}(F_j)| \leq 2$.
We can apply the Brascamp-Lieb theorem on $F_j$, with the projection $\phi_{i,j}$, together with the projection $\phi_k$:
$$
|F_j| \leq |\phi_k(F_j)| \times |\phi_{i,j}(F_j)| \leq 2 \times |\phi_{i,j}(F_j)|.
$$

And since dimension $j$ is included in the dimensions $(i,j)$ of the projection, we have $R=1$ and therefore:
$$
|F| = \sum_{j \in \phi_j(F)} |F_j| \leq 2 \times \sum_{j \in \phi_j(F)} |\phi_{i,j}(F_j)| \leq 2K.
$$

\subsection{Part 4 - Wrapping things up}
\label{subsec:gen_iolb_proof_part4}

We have found in Section~\ref{subsec:gen_iolb_proof_part2} and Section~\ref{subsec:gen_iolb_proof_part3} an upper bound of the size of the two parts of $E$.
We simply sum them together to obtain an upper bound of the size of $E$, then apply Theorem~\ref{thm:STpartitioning} to deduce a lower bound on the data movement.

\paragraph{Running example}
Thanks to the above results, we can obtain lower bounds on the data movement of MGS.
\begin{theorem}[Lower bounds for MGS]
    The communication volume $Q$ for the MGS algorithm on a $M\times N$ matrix can be bounded as follows:
    $$\frac{M^2N(N-1)}{8(S+M)} \leq Q$$

    Furthermore, if $S\leq M$, we also have: 
    $$\frac{(M-S)N(N-1)}{4}\leq Q$$
    \label{thm:lb_mgs}
\end{theorem}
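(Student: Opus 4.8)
The plan is to feed the upper bound on the size of a convex $K$-bounded set, established in the running examples of Sections~\ref{subsec:gen_iolb_proof_part2}--\ref{subsec:gen_iolb_proof_part4}, into the $(S+T)$-partitioning bound of Theorem~\ref{thm:STpartitioning}, and then to optimize the free parameter $T$ (equivalently $K=S+T$). Concretely, I would take the nodes to be the instances of the broadcast statement $SU$, count them as $|V|=\frac{MN(N-1)}{2}$ (the $\frac{N(N-1)}{2}$ pairs $(k,j)$ with $j>k$, times the $M$ values of $i$), and use $U=\frac{K^2}{M}+2K$ as the maximal number of such instances in a convex $K$-bounded set, this being the sum of the bounds $|I'|\leq\frac{K^2}{M}$ and $|F|\leq 2K$ already derived for MGS. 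Since partitioning $|V|$ broadcast instances into sets containing at most $U$ of them requires at least $|V|/U$ sets, Theorem~\ref{thm:STpartitioning} gives $Q\geq T\cdot\frac{|V|}{U}$, and the whole argument reduces to choosing $K$ well in each regime.

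For the first (general) bound I would write $U=\frac{K(K+2M)}{M}$, so that $\frac{|V|}{U}=\frac{M^2N(N-1)}{2K(K+2M)}$ and $Q\geq (K-S)\,\frac{M^2N(N-1)}{2K(K+2M)}$. The cleanest choice is $K=2(S+M)$, i.e. $T=S+2M$: then $K-S=S+2M$ and $K+2M=2(S+2M)$, the factor $S+2M$ cancels, and one lands exactly on $\frac{M^2N(N-1)}{8(S+M)}$. I would verify that this is near the maximizer of $(K-S)/(K(K+2M))$; its exact critical point is $K=S+\sqrt{S^2+2SM}$, which is not a clean expression, so $K=2(S+M)$ should be presented as a convenient near-optimal choice yielding a valid (if slightly loose) lower bound rather than as the true optimum.

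For the second bound, valid when $S\leq M$, I would exploit the regime $K\leq M$, where the hourglass width forces the volume part to vanish, $E'=\emptyset$. Indeed, by Lemma~\ref{lemma1_gen} any $\vec{j}$-slice with three or more ticks contains an interior $\vec{k}$ with $|\phi_i(E'_{\vec{j},\vec{k}})|\geq M$, which is impossible once $K\leq M$ because $|\phi_i(E)|\leq|\phi_{i,j}(E)|\leq K$. Hence every set is flat and $U=2K$ (the Part~3 bound alone). Taking $T=M-S$ so that $K=M$, I obtain $\frac{|V|}{U}=\frac{N(N-1)}{4}$ and therefore $Q\geq (M-S)\,\frac{N(N-1)}{4}$, which is the claimed inequality.

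The main obstacle I anticipate is not conceptual but is the parameter optimization together with some boundary care. The optimization must be carried out separately in the two regimes, and the seemingly cleaner floor version of Theorem~\ref{thm:STpartitioning} must be replaced by the exact ratio $|V|/U$ (justified as above) so that the closed forms come out without a spurious additive loss. A secondary subtlety is the boundary case $K=M$ in the second bound: I would either take $K$ strictly below $M$ and pass to the limit, or argue directly that a slice saturating $|\phi_i|=M$ already exhausts the inset and so cannot coexist with the volume-along-$\vec{k}$ structure that $E'$ would require, confirming that $U=2K$ remains valid at $K=M$.
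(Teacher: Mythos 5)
Your proposal is correct and follows essentially the same route as the paper: it feeds the bound $|E|\leq \frac{K^2}{M}+2K$ into Theorem~\ref{thm:STpartitioning} and optimizes $K$, the only cosmetic difference being your choice $K=2(S+M)$ where the paper takes $K=2S$ (both evaluate to exactly $\frac{M^2N(N-1)}{8(S+M)}$), and for the $S\leq M$ case you force $E'=\emptyset$ at $K=M$ just as the paper does via the strict inequality $|InSet(E')|>M$ coming from the extra input dependency along $i$. Your added care about dropping the floor in Theorem~\ref{thm:STpartitioning} and about the boundary case $K=M$ addresses points the paper passes over silently, but does not change the argument.
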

\begin{proof}
From the previous results, we have:
$$|E| = |I'| + |F| \leq \frac{K^2}{M} + 2K .$$

Then, by using Theorem~\ref{thm:STpartitioning} with $K=2S$:
$$
(K-S) \times \frac{MN(N-1)}{2\cdot\left(\frac{K^2}{M} + 2K\right)} = \frac{M^2N(N-1)}{8(S+M)}\leq Q
$$

Due to Lemma~\ref{lemma1_gen} and because we have at least an input dependency involving the dimension $i$, $|InSet(E')| > M$.
So, if we have $S \leq M$, then $E'$ must be empty. Therefore, $E = F$, and we can use only the second part of the bound: $|E|\leq 2K$.

Using Theorem~\ref{thm:STpartitioning} again, but this time with $K = M$, we obtain:
$$(K-S)\cdot\frac{MN(N-1)}{2 \times 2K} = (M-S)\cdot \frac{N(N-1)}{4}\leq Q$$
\end{proof}

\section{Experimental results - New lower bounds}
\label{sec:new_bounds}

In this section, we report the data movement lower bounds generated by IOLB for four kernels exhibiting an hourglass pattern. We compare the results using our technique (new bound) with those obtained without it (old bound).
These kernels are:
\begin{itemize}
    \item Modified Gram-Schmidt (Figure~\ref{fig:mgs_rl}), already used as a running example in Section~\ref{sec:gen_iolb_proof}.
    \item  QR Householder algorithm: both its A2V (Figure~\ref{fig:householder_a2v}) and V2Q parts (left-looking variants of respectively the GEQR2 and ORG2R subroutines in LAPACK~\cite{githubLAPACK}).
    \item Reduction to a bidiagonal matrix (GEBD2 subroutine)
    \item Reduction to a Hessenberg matrix (GEHD2 subroutine)
\end{itemize}


Figure~\ref{fig:IOLB_full_bounds} summarizes all the newly found full lower bounds, and Figure~\ref{fig:recap_bounds} focuses on their leading term, to emphasize their improvements.
More precisely:
\begin{itemize}
    \item Section~\ref{subsec:asymp_lb_mgs} contains an asymptotic analysis of the MGS lower bound.
    \item Section~\ref{subsec:qr_hh_lb} presents the lower bound for both QR Householder parts, and the GEBD2 computation.
    \item Section~\ref{subsec:gehd2_lb} presents the lower bounds for the GEHD2 computation.
\end{itemize}

Annex~\ref{sec:annex_ub_algo} contains a description of a tiled algorithm for MGS and for HH A2V and the computation of their amount of data movement.
This provides an upper bound to the minimal amount of data movement required by these algorithms, which matches asymptotically the provided lower bound.

\begin{figure}
    \centering
    \begin{tabular}{c|c|c}
        Kernel & Old bound~\cite{Olivry_pldi20} & New bound (hourglass)\\
        \hline
        MGS & $\Omega\left(\frac{MN^2}{\sqrt{S}}\right)$ & $\Omega\left(\frac{M^2N(N-1)}{S+M}\right)$\\
        QR HH A2V & $\Omega\left(\frac{MN^2}{\sqrt{S}}\right)$ & $\Omega\left(\frac{MN^2(N-M)}{N-M-S}\right)$ \\
        QR HH V2Q & $\Omega\left(\frac{MN^2}{\sqrt{S}}\right)$ & $\Omega\left(\frac{MN^2(N-M)}{N-M-S}\right)$ \\
        GEBD2 & $\Omega\left(\frac{MN^2}{\sqrt{S}}\right)$ & $\Omega\left(\frac{MN^2(M-N+1)}{8(S+M-N+1)}\right)$\\
        GEHD2 & $\Omega\left(\frac{N^3}{\sqrt{S}}\right)$ & $\Omega\left(\frac{N^4}{N+2S}\right)$\\
    \end{tabular}
    \caption{Summary of the new asymptotic data movement lower bounds.}
    \label{fig:recap_bounds}
\end{figure}

\begin{figure*}
    \centering
    
    \begin{tabular}{c|c}
        Kernel & Old bound~\cite{Olivry_pldi20}\\
        \hline
        MGS & $\frac{2M + 3MN + MN^2}{\sqrt{S}} + 5M - MN + \frac{7N-N^2}{2} - S - 6$\\
        QR HH A2V & $\frac{3MN^2 + 6M + 7N - N^3 - 9MN - 6}{3\sqrt{S}} + 5M -MN + 5N - S - 13$\\
        QR HH V2Q & $\frac{3MN^2 - N^3 + 6M + 7N - 9MN - 6}{3\sqrt{S}} + 2M + 2N + \frac{N - N^2}{2} -S - 4$\\
        GEBD2 & $\frac{3MN^2 - N^3 - 9MN + 6M + 7N - 6}{ 3\sqrt{S} } + 5N + 5M - MN - S - 13$\\
        GEHD2 & $\frac{5N^3 - 30N^2 + 55N - 30}{3\sqrt{S}} + \frac{69N-9N^2}{2} - 3*S - 56$\\
    \end{tabular}
    
    \begin{tabular}{c|c}
        Kernel & New bound (hourglass)\\
        \hline
        MGS & $\frac{N^2M^2 + 2M^2 - 3NM^2}{ 8(M+S) } + 5M - MN + \frac{7N-N^2}{2} - S - 6$ \\
        QR HH A2V  & $\frac{3MN^2 - 9MN + 7N + 6M - 6 - N^3}{24 (1 - \frac{S}{N-M})} + 5M - MN + 5N - S - 13$ \\
        QR HH V2Q & $\frac{3MN^2 - N^3 + 6M + 7N - 9MN - 6}{ 24 (1 + \frac{S}{M-N}) } + 2M + 2N + \frac{N - N^2}{2} - S - 4$ \\
        GEBD2 & $\frac{3MN^2 - N^3 + 3N^2 - 15MN + 4N + 18M - 12}{ 24 (1+ \frac{S}{1+M-N} ) } + 5N + 7M - MN - S - 18$\\
        GEHD2 & $\frac{N^3 - 6N^2 + 11N - 6}{ 12 (1 + \frac{S}{N-M-1} ) } - N^2 + 12N - S - 19$ \\
    \end{tabular}
    \caption{Data movement lower-bounds (with constants) automatically derived by IOLB~\cite{Olivry_pldi20} without/with hourglass detection.
    In GEHD2's new bound, a new parameter $M$ is introduced, corresponding to the place where we split the outer loop.
    Depending on $S$ and $N$, it can be instantiated with a different parametric expression (cf Section~\ref{subsec:gehd2_lb}).}
    \label{fig:IOLB_full_bounds}
\end{figure*}





%


\subsection{MGS - Asymptotic analysis}
\label{subsec:asymp_lb_mgs}

In this section, we analyze the bound obtained in Theorem~\ref{thm:lb_mgs} for MGS, by specializing it for different ordering of $M$ and $S$:
\begin{itemize}
    \item If $S\leq M/2$, we have $M/2\leq M-S$, so that the second bound yields:
    $$\frac{MN^2}{8} = \Omega(MN^2)\leq Q$$
    \item If $M/2\leq S$, we have $S+M\leq 3S$, so that the first bound becomes:
    $$\frac{M^2N^2}{24S} = \Omega\left(\frac{M^2N^2}{S}\right)\leq Q$$
\end{itemize}

The first result matches the amount of data movements obtained by the classical ordering of the MGS algorithm, as presented at the end of Section~\ref{subsec:hourglass_intuition}. Both the algorithm and the bound are thus asymptotically optimal when $S$ is small.

Demmel et al.~\cite{techreport-berkeley-demmel-grigori-hoemmen-langou-2008} propose a tiled ordering for the case $2M\leq S$ (in Section F.2 of their paper), that achieves an amount of data movement $O(M^2N^2/S)$, thus matching asymptotically the second upper bound. Both of these bounds are thus optimal up to a constant factor. For reference, the tiled ordering is detailed in Appendix~\ref{sec:appendix_mgs_algo}, together with the proof on its amount of data movement.

We can compare these bounds with the lower bound returned by the classical hourglass-less proof, whose asymptotic bound is $\Omega(\frac{MN^2}{\sqrt{S}})$. Our first bound for small values of $S$ is stronger by a factor of $Theta(\sqrt{S})$.
By writing our second bound as $\Omega(\frac{M}{\sqrt{S}}\cdot\frac{MN^2}{\sqrt{S}})$, we see that our bound is stronger by a factor of $\Theta(\frac{M}{\sqrt{S}})$. Since the input matrix has size $M \times N$, we can assume that $S< MN$, otherwise, the whole matrix fits in the cache and there is no need to minimize data transfers. Besides, we know that $N\leq M$, which leads to $S<M^2$.
We can conclude that $\frac{M}{\sqrt{S}} > 1$: our bound is asymptotically at least as strong as the previous bound.

The results of Theorem~\ref{thm:lb_mgs} can also be presented differently, to improve the constant in front of the dominant term. Indeed:
\begin{align*}
&\text{If $S\ll M$, the first bound yields} & \frac{MN^2}{4} &\leq Q,\\
&\text{Similarly, if $M\ll S$, the second result becomes}&  \frac{M^2N^2}{8S} &\leq Q.
\end{align*}

\subsection{Householder QR factorization and GEBD2}
\label{subsec:qr_hh_lb}

In this section, we present the bounds to both parts of the Householder QR factorization (LAPACK routines GEQR2 et ORG2R). The computation of the first part (A2V) is given in Figure~\ref{fig:householder_a2v} and the computation of the second part (V2Q) in Figure~\ref{fig:householder_v2q}.

\begin{figure}
\lstset{
  basicstyle=\footnotesize,
  xleftmargin=.07\textwidth, xrightmargin=.05\textwidth
}
\begin{lstlisting}[style=myC]
for ($k=N-1$; $k > -1$; $k\ -= 1$) {
    for ($j = k+1$; $j < N$; $j\ \inc 1$){
        $tau[j]$ = $0.0$;
        for($i = k+1$; $i < M$; $i\ \inc 1$){
SR:         $tau[j]$ += $A[i][k]$ * $A[i][j]$;
        }
    }
    for($j = k+1$; $j < N$; $j\ \inc 1$){ 
ST:     $tau[j]$ *= $tau[k]$;
    }
    $A[k][k]$ = $1.0$ - $tau[k]$;
    for($j = k+1$; $j < N$; $j\ \inc 1$){
        $A[k][j]$ = $-tau[j]$;
    }
    for($j = k+1$; $j < N$; $j\ \inc 1$){
        for($i = k+1$; $i < M$; $i\ \inc 1$){
SU:         $A[i][j]$ -= $A[i][k]$ * $tau[j]$;
        }
    }
    for($i = k+1$; $i < M$; $i\ \inc 1$){
        $A[i][k]$ = $- A[i][k]$ * $tau[k]$;
    }
}
\end{lstlisting}
\caption{QR Householder computation - Part V2Q (LAPACK subroutine ORG2R). We assume that $M \geq N$.}
\label{fig:householder_v2q}
\end{figure}


By using the hourglass reasoning, we obtain the following new lower bounds for both kernels.

\begin{theorem}[Lower bounds for HH - part A2V]
    The communication volume $Q$ for the A2V part of the HH algorithm on a $M \times N$ matrix, with $M>N$, can be bounded as follows:
    $$
    \frac{(3M-N)N^2(M-N)^2}{24(MS+(M-N)^2)} \leq Q
    $$
    If $M \gg N$, then the bound becomes:
    $$
    \frac{M^2 N(N-1)}{8(S+M)} \leq Q
    $$
\label{thm:lb_hh_a2v}
\end{theorem}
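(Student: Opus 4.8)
The plan is to follow exactly the four-part proof template developed in Section~\ref{sec:gen_iolb_proof} for MGS, now instantiated for the A2V kernel of Figure~\ref{fig:householder_a2v}. First I would fix a convex $K$-bounded set $E$ of instances of the broadcast statement $SU$, and extract from the dependency chains of $SU$ (through the accesses $A[i][j]$, $A[i][k]$ and $tau[j]$, with $SR$ providing the reduction) the projection set $\Phi = \{\phi_{i,j}, \phi_{i,k}, \phi_{k,j}\}$. The essential structural difference from MGS is that the iteration domain here is triangular rather than rectangular: the reduction/broadcast dimension $i$ ranges over $k+1 \leq i < M$, so the width of the hourglass depends on $\vec{k}$. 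Concretely, the hourglass property (as noted in the examples of Section~\ref{subsec:hourglass_def}) gives $W_{\vec{a}} = M - k$ rather than the constant $M$, and $|\phi_i(I')|$ is bounded by the full width $M - k_{\min}$ on the inner slices.

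Next I would carry out Part 2 and Part 3 with this triangular bookkeeping. By Lemma~\ref{lemma1_gen}, on every inner slice the projection along $i$ has size at least $M-k$, so Lemma~\ref{lemma2_gen} yields $|\phi_j(I')| \leq K/(M-k)$ and $|\phi_k(I')| \leq K/(M-k)$ on each $k$-level; applying Brascamp--Lieb with the projections on $i$, on $j$ (in place of $(i,j)$) and on $k$ (in place of $(i,k)$) gives a per-slice bound whose product must then be summed over the admissible range of $k$. This summation over the triangular domain is what produces the $(M-N)^2$ and $MS$ terms appearing in the denominator of the stated bound, in contrast to the clean $K^2/M$ of the MGS case. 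For Part 3, the flatness bound $|\phi_k(F_{\vec{j}})| \leq 2$ combined with the projection $\phi_{i,j}$ (which contains $j$, so $R=1$) again bounds $|F|$ by a linear-in-$K$ term, now with the $k$-dependent width folded into the constant.

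Then, in Part 4, I would sum $|E| = |I'| + |F|$ and feed the result into Theorem~\ref{thm:STpartitioning}, choosing the value of $K$ (equivalently $T = K - S$) that optimizes the resulting ratio $T \lfloor |V|/U \rfloor$, where $|V| = \Omega(MN^2)$ is the number of $SU$ instances over the triangular domain. The constant $24$ and the precise coefficient $(3M-N)N^2(M-N)^2$ come from evaluating these polyhedral sums exactly (the $3M-N$ factor is the signature of summing $M-k$ over the triangle), rather than using the rectangular approximations valid for MGS. The $M \gg N$ specialization then follows by letting the $(M-N)^2$ terms dominate over $MS$ in the appropriate regime and simplifying $(3M-N)(M-N)^2/(M\cdot\text{stuff})$ down to the stated $M^2N(N-1)/(8(S+M))$ form, mirroring the MGS leading term.

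The main obstacle I anticipate is the triangular geometry of the iteration domain: unlike MGS, the width $M-k$ varies across the temporal dimension, so the bounds from Lemma~\ref{lemma2_gen} are no longer uniform in $\vec{k}$, and both $|I'|$ and $|F|$ become sums of $k$-dependent terms rather than a single closed expression. Getting the exact coefficients of the final bound therefore requires careful evaluation of polyhedral (Ehrhart-type) sums such as $\sum_k (M-k)$ and $\sum_k K^2/(M-k)$ over the relevant range, and correctly matching the domain size $|V|$ to these sums so that the optimization over $K$ produces the clean stated denominator $MS + (M-N)^2$. I would expect this accounting to be the delicate step, with the rest following mechanically from the already-established Lemmas~\ref{lemma1_gen} and~\ref{lemma2_gen} and Theorem~\ref{thm:STpartitioning}.
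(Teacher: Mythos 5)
Your overall framework (the four-part decomposition, Lemmas~\ref{lemma1_gen} and~\ref{lemma2_gen}, then Theorem~\ref{thm:STpartitioning} applied to the triangular instance count) is the same as the paper's, but the way you propose to handle the varying hourglass width diverges from what the paper does, and is where your plan would break down. The paper's treatment is much blunter: since the width at outer iteration $k$ is $M-1-k$, it simply takes the \emph{minimum} over the whole domain, $W=M-N$ (attained at $k=N-1$), and feeds this single uniform value into Lemmas~\ref{lemma1_gen} and~\ref{lemma2_gen} exactly as in the MGS case. This gives $|\phi_j(I')|\leq K/(M-N)$ and $|\phi_k(I')|\leq K/(M-N)$ globally, hence $|I'|\leq M\cdot K^2/(M-N)^2$ (the projection on $i$ is still bounded by $M$), and $|F|\leq 2K$ as before. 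With $K=2S$ one gets $|E|\leq 4S\,(MS+(M-N)^2)/(M-N)^2$, and the only place the triangular geometry enters is the instance count $|V|\approx N^2(3M-N)/6$. That is exactly where the factors $(M-N)^2$, $MS$, $3M-N$ and $24$ in the statement come from.

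Your proposal instead attributes the denominator $MS+(M-N)^2$ to a per-$k$ summation of slice-wise bounds such as $\sum_k K^2/(M-k)$. There are two problems with this. First, such sums produce harmonic-type expressions, not the clean $MS+(M-N)^2$, so even carried out exactly the accounting would not land on the stated formula. Second, and more fundamentally, Lemma~\ref{lemma2_gen} cannot be applied slice-by-slice with a $k$-dependent constant and then summed: its proof bounds $|\phi_{\vec{a},\vec{x}}(I')|$ from below by a width that must hold \emph{uniformly} over all slices, and the projections $\phi_j(I'_{k})$ of distinct $k$-slices overlap, so $\sum_k|\phi_j(I'_{k})|$ is not controlled by $K$. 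You would have to restructure the lemma to keep the $k$-dependence, and you would gain nothing for this theorem. The fix is to drop the refinement, use the worst-case width $M-N$ everywhere, and let the triangular domain enter only through $|V|$; the rest of your outline then goes through verbatim as in the MGS proof.
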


\begin{theorem}[Lower bound for HH - Part V2Q]
    The communication volume $Q$ for the HH V2Q algorithm on a $M \times N$ algorithm, with $M > N$, can be bounded as follows:
    $$
    \frac{N (N-1)(3M-N-1)(M-N)^2}{24 ( (M-N)^2 + SM )} \leq Q
    $$
    When $M \gg N$, this bound becomes:
    $$
    \frac{N(N-1)M^2}{8 (S+M)}\leq Q
    $$
\label{thm:lb_hh_v2q}
\end{theorem}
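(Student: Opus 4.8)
The plan is to run the four-part hourglass argument of Section~\ref{sec:gen_iolb_proof} on V2Q (Figure~\ref{fig:householder_v2q}), exactly as for MGS in Theorem~\ref{thm:lb_mgs}; since V2Q is structurally almost identical to the A2V part (Theorem~\ref{thm:lb_hh_a2v}), I expect the proofs to run in parallel. First I would identify the hourglass: the broadcast statement is $SU$ (which spreads $tau[j]$ across the $\vec{i}$ dimension) and the reduction $SR$ closes the dependence cycle, so $k$ is the temporal dimension, $i$ is the reduction/broadcast dimension and $j$ is neutral. Because the outer loop counts down, I would reparametrize $k\mapsto N-1-k$ so that the temporal dimension increases, then check the hourglass conditions: between two consecutive temporal values the chain through $SR$ visits $\Theta(M-k)$ instances, which is parametric, so the ``large width'' condition holds. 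Reading off the accesses $A[i][j]$, $A[i][k]$ and $tau[j]$ of $SU$ yields the same projection set $\Phi=\{\phi_{i,j},\phi_{i,k},\phi_{k,j}\}$ as for MGS.

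Next I would apply the decomposition $E=I'\uplus F$ of Section~\ref{subsec:gen_iolb_proof_part1} with Lemma~\ref{lemma1_gen} and Lemma~\ref{lemma2_gen}. For $I'$ I would project onto the single dimensions $\phi_i,\phi_j,\phi_k$ and apply Brascamp--Lieb (Theorem~\ref{thm:BLthm}) with all three coefficients equal to $1$, bounding $|\phi_i(I')|\leq M$ by the full extent of the reduction axis and $|\phi_j(I')|,|\phi_k(I')|\leq K/(M-N)$ by Lemma~\ref{lemma2_gen}, which gives $|I'|\leq \frac{MK^2}{(M-N)^2}$. For the flat part, the flatness bound $|\phi_k(F_{\vec{j}})|\leq 2$ together with $\phi_{i,j}$ (which covers $j$, so $R=1$) gives $|F|\leq 2K$, hence $|E|\leq \frac{MK^2}{(M-N)^2}+2K$. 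Finally I would count the broadcast instances $|SU|=\sum_{k}(M-1-k)(N-1-k)=\frac{N(N-1)(3M-N-1)}{6}$, which plays the role of $|V|$ in Theorem~\ref{thm:STpartitioning}, and set $K=2S$; simplifying $(K-S)\lfloor |V|/U\rfloor$ yields $\frac{N(N-1)(3M-N-1)(M-N)^2}{24((M-N)^2+SM)}$, and letting $M\gg N$ collapses this to $\frac{N(N-1)M^2}{8(S+M)}$.

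The main obstacle, and the only genuine departure from the MGS proof, is that the iteration domain is triangular, so the hourglass width $M-k$ is no longer constant, and I must keep two distinct notions of width apart. The divisor $W_{\vec{a}}$ in Lemma~\ref{lemma2_gen} must be a \emph{lower} bound on the length of a full reduction column appearing in an interior slice of $I'$; since the shortest admissible column (at the largest $k$) has length $M-N+1$, I would conservatively take $W_{\vec{a}}=M-N$. By contrast, $|\phi_i(I')|\leq M$ is an \emph{upper} bound on the number of distinct $i$-values and uses the full reduction axis. Justifying the lower bound is where Lemma~\ref{lemma1_gen} does the work: once $I'_{\vec{j}}$ spans at least three temporal ticks, convexity forces every interior slice to contain an entire reduction column, and in the triangular domain that column still has length at least $M-N$. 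Getting this minimum-column estimate right -- rather than accidentally using the maximum width in the denominator -- is precisely what produces the $(M-N)^2$ factor and the exact constant in the statement; the rest is the same bookkeeping as in Theorem~\ref{thm:lb_hh_a2v}.
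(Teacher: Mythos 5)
Your proposal is correct and follows essentially the same route as the paper, which only sketches this proof by observing that the V2Q hourglass width is $(M-1-k)$ with minimum $(M-N)$ and deferring to the general framework of Section~\ref{sec:gen_iolb_proof}; you fill in exactly the intended details, and your instance count $|SU|=\frac{N(N-1)(3M-N-1)}{6}$, the bounds $|I'|\leq \frac{MK^2}{(M-N)^2}$ and $|F|\leq 2K$, and the choice $K=2S$ reproduce the stated constants exactly. Your explicit separation of the upper bound $M$ on $|\phi_i(I')|$ from the lower bound $M-N$ on the interior column length is precisely the point the paper leaves implicit.
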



One interesting detail of the proof concerns the detection of the hourglass, and the criteria on its size (third criterion introduced in Section~\ref{subsec:hourglass_def}).
For the MGS computation, the size of its hourglass was constant and equal to $M$.
In the case of both Householder computations, the size of their hourglass is parametrized by the outer loop iteration value $k$, and is equal to $(M-1-k)$.
Its smallest value happens for $k=N-1$, and is $(M-N)$.
By using this value in Lemma~\ref{lemma1_gen}, we can derive the announced lower bound.

The lower bound proof of the GEBD2 subroutine is similar to both Householder proofs.

\begin{theorem}[Lower bounds for GEBD2]
    The communication volume $Q$ for the GEBD2 subroutine on a $M \times N$ matrix, with $M\geq N$, can be bounded as follows:
    $$
    \frac{MN^2(M-N+1)}{8(S+M-N+1)} \leq Q
    $$
    
    If $M \gg N$, then the bound becomes:
    $$
    \frac{M^2 N^2}{8(S+M)} \leq Q
    $$
\label{thm:lb_hh_a2v}
\end{theorem}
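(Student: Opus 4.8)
The plan is to follow exactly the same four-part structure used for the MGS proof (Sections~\ref{subsec:gen_iolb_proof_part1}--\ref{subsec:gen_iolb_proof_part4}), since GEBD2 exhibits an hourglass pattern with the same dependency skeleton as the Householder A2V computation. First I would identify the hourglass: as in A2V, the reduction statement $SR$ accumulates along the $i$ dimension and the update statement $SU$ broadcasts along $i$, with $k$ the temporal dimension and $j$ the neutral dimension. The crucial quantitative input is the width of the hourglass. In GEBD2 the inner loops run over the trailing submatrix, so the reduction/broadcast dimension has size parametrized by $k$, and its \emph{smallest} value over the range of $k$ is what must feed Lemma~\ref{lemma1_gen}. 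For a matrix with $M\geq N$ the trailing dimension shrinks to roughly $M-N+1$, which explains the appearance of $W_{\vec{a}} = M-N+1$ in the denominator of the announced bound.

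Next I would instantiate the two size bounds. For the ``inside'' part $I'$, by Lemma~\ref{lemma2_gen} the projections sharing the reduction dimension $i$ get divided by $W_{\vec{a}}=M-N+1$, giving $|\phi_j(I')|\leq K/(M-N+1)$ and $|\phi_k(I')|\leq K/(M-N+1)$, and applying Brascamp--Lieb with the three projections $\phi_i,\phi_j,\phi_k$ yields a bound of order $K^2/(M-N+1)$ (with the width $W$ bounding $|\phi_i(I')|$). For the ``flat'' part $F$, the flatness bound $|\phi_k(F_{\vec j})|\leq 2$ combined with the projection $\phi_{i,j}$ and the fact that $j$ is covered (so $R=1$) gives $|F|\leq 2K$, just as in MGS. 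Summing produces an upper bound of the form $|E|\leq \tfrac{K^2}{M-N+1}+2K$ (up to the exact constants visible in Figure~\ref{fig:IOLB_full_bounds}).

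Then I would plug this into Theorem~\ref{thm:STpartitioning}. The number of nodes $|V|$ of the CDAG of GEBD2 is a cubic polynomial in $M,N$ whose leading term is $\tfrac{1}{2}MN^2$-ish (it must be computed precisely by summing the trailing-submatrix sizes over $k$). Choosing $K=2S$ as in the MGS proof, the floor $\lfloor |V|/U\rfloor$ with $U$ the maximal set size $\tfrac{K^2}{M-N+1}+2K$ and the factor $(K-S)=S$ out front combine, after simplification, into the claimed leading term $\tfrac{MN^2(M-N+1)}{8(S+M-N+1)}$. The asymptotic regime $M\gg N$ follows by replacing $M-N+1$ with $M$, recovering $\tfrac{M^2N^2}{8(S+M)}$.

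The main obstacle I anticipate is not the structural argument, which transfers verbatim, but the bookkeeping of the width and the node count. Unlike MGS, where the hourglass width was the constant $M$, here the width depends on $k$, so one must argue that using the minimal width $M-N+1$ across the whole range is legitimate (the set $E'$ can only be non-trivial when $K$ exceeds the \emph{local} width, and the worst case governs the partition-size lower bound). Getting the exact constants and the $+1$ corrections right in $|V|$ and in $U$ so that they reproduce the precise rational expression in Figure~\ref{fig:IOLB_full_bounds} is the genuinely delicate part; the rest is a direct replay of Theorem~\ref{thm:lb_mgs}.
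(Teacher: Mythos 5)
Your proposal follows essentially the same route as the paper, which only sketches this proof by noting it is "similar to both Householder proofs": identify the hourglass whose width shrinks with the outer iteration $k$, feed its minimal value $M-N+1$ into Lemma~\ref{lemma1_gen} and Lemma~\ref{lemma2_gen}, obtain $|E|\leq K^2/(M-N+1)+2K$, and conclude via Theorem~\ref{thm:STpartitioning} with $K=2S$. Your arithmetic reproduces the stated leading term, and your flagged concern about justifying the use of the minimal width is exactly the point the paper itself addresses for the Householder kernels.
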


\subsection{Hessenberg matrix reduction}
\label{subsec:gehd2_lb}

\begin{figure}
\lstset{
  basicstyle=\footnotesize,
  xleftmargin=.03\textwidth, xrightmargin=.05\textwidth
}
\begin{lstlisting}[style=myC]
for ($j = 0$; $j<n-2$; $j \inc 1$) {
  $norma2$ = $0.0$;
  for ($i=j+2$; $i<n$; $i \inc 1$) {
    $norma2$ += $A[i][j]$ * $A[i][j]$;
  }
  $norma$ = $sqrt$( $A[j+1][j]$ * $A[j+1][j]$ + $norma2$);
  $A[j+1][j]$ = ($A[j+1][j] > 0$)?
        ($A[j+1][j]$+$norma$):($A[j+1][j]$-$norma$);
  $tau$ = $2.0$/($1.0$ + $norma2$/($A[j+1][j]$ * $A[j+1][j]$));
  for ($i = j+2$; $i < n$; $i \inc 1$ ) {
    $A[i][j]$ /= $A[j+1][j]$;
  }
  $A[j+1][j]$ = ($A[j+1][j] > 0$)?($-norma$):($norma$);
  for ($i = j+1$; $i < n$; $i \inc 1$) {
    $tmp[i]$ = $A[j+1][i]$;
    for ($k = j+2$; $k < n$; $k \inc 1$) {
      $tmp[i]$ += $A[k][j]$ * $A[k][i]$;
    }
  }
  for ($i = j+1$; $i < n$; $i \inc 1$) {
    $tmp[i]$ *= $tau$;
  }
  for ($i = j+1$ ; $i < n$ ; $i \inc 1$) {
    $A[j+1][i]$ -= $tmp[i]$;
  }
  for ($i = j+2$; $i < n$; $i \inc 1$) {
    for ($k = j+1$; $k < n$; $k \inc 1$) {
      $A[i][k]$ -= $A[i][j]$ * $tmp[k]$;
    }
  }
  for ($i = 0$; $i < n$; $i \inc 1$) {
    $tmp[i]$ = $A[i][j+1]$;
    for ($k = j+2$; $k < n$; $k \inc 1$) {
      $tmp[i]$ += $A[i][k]$ * $A[k][j]$;
    }
  }
  for ($i = 0$; $i < n$; $i \inc 1$) {
    $tmp[i]$ *= $tau$;
  }
  for ($i = 0$; $i < n$; $i \inc 1$) {
    $A[i][j+1]$ -= $tmp[i]$;
  }
  for ($i = 0$; $i < n$; $i \inc 1$) {
    for ($k = j+2$; $k < n$; $k \inc 1$) {
      $A[i][k]$ -= $tmp[i]$ * $A[k][j]$;
    }
  }
}
\end{lstlisting}
\caption{Hessenberg matrix factorization of a $N \times N$ matrix (LAPACK subroutine GEHD2)}
\label{fig:gehd2_code}
\end{figure}

In this section, we present the bound for the Hessenberg matrix factorization kernel (GEHD2 kernel in LAPACK), whose code is in Figure~\ref{fig:gehd2_code}.
%
By using the hourglass reasoning, we obtain the following new lower bound for the GEHD2 kernel.

\begin{theorem}[Lower bound for GEHD2]
    The communication volume $Q$ for the Hessenberg matrix factorization algorithm on a $N \times N$ algorithm can be bounded as follows:
    $$
    \frac{1}{12} . \frac{N^4}{N+2S} \leq Q
    $$
    
    When $N \gg S$,
    $$
    \frac{N^3}{24} \leq Q
    $$
\label{thm:lb_gehd2}
\end{theorem}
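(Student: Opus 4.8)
The plan is to follow the same four-part template established for MGS (Theorem~\ref{thm:lb_mgs}) and reused for the Householder and GEBD2 kernels, adapting the hourglass parameters to the GEHD2 code in Figure~\ref{fig:gehd2_code}. First I would identify the hourglass pattern. Inspecting the code, the two broadcast-reduction cycles (the update of the trailing submatrix via \texttt{tmp} built from the reduction over $k$, and then the symmetric update applied on the other side) expose a statement whose dependence chain returns to itself after incrementing the outer index $j$. The temporal dimension is $\vec{k}=(j)$, the reduction/broadcast dimension is the row/column index ranging over roughly $N-j$ values, and there is a neutral dimension $\vec{j}$ corresponding to the second trailing-submatrix coordinate. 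The crucial observation, exactly as noted for A2V and V2Q in Section~\ref{subsec:qr_hh_lb}, is that the hourglass width is \emph{parametric in the outer index}: here the width is on the order of $N-j$, whose minimal value over the loop range is what governs the bound. The extra subtlety announced in the caption of Figure~\ref{fig:IOLB_full_bounds} is that one must split the outer loop at an arbitrary position $M$ and treat $M$ as a new parameter, so that the width lower bound $W_{\vec{a}}$ used in Lemma~\ref{lemma1_gen} becomes $N-M-1$ on the retained portion of the loop.

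Next I would run the decomposition of Part~1 (Section~\ref{subsec:gen_iolb_proof_part1}) verbatim, writing the $K$-bounded set $E$ as $I'\uplus F$. For $I'$, I apply Lemma~\ref{lemma2_gen} with $W_{\vec{a}}=N-M-1$: the two projections sharing the reduction dimension get their bounds divided by $N-M-1$, and the Brascamp-Lieb theorem (Theorem~\ref{thm:BLthm}) then yields a bound of the form $|I'|\leq \frac{K^2}{N-M-1}$ mirroring the $\frac{K^2}{M}$ obtained for MGS. For $F$, the flatness bound $|\phi_{\vec{k}}(F_{\vec{j}})|\leq 2$ together with the neutral-dimension argument of Part~3 gives $|F|\leq 2K$ (or the appropriate parametric multiple $R\cdot$ if the neutral dimension is not fully covered by the surviving projection — I expect $R$ to be a constant here). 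Summing, $|E|\leq \frac{K^2}{N-M-1}+2K$.

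Finally, Part~4 (Section~\ref{subsec:gen_iolb_proof_part4}) plugs this into Theorem~\ref{thm:STpartitioning}. The total vertex count $|V|$ of the GEHD2 CDAG is cubic, on the order of $\frac{N^3}{\text{const}}$, summing the two trailing-submatrix updates over all $j$. Choosing $K=S+T$ optimally and then optimizing the split parameter $M$ should collapse the full expression of Figure~\ref{fig:IOLB_full_bounds} into the asymptotic leading term
\begin{equation*}
\frac{1}{12}\cdot\frac{N^4}{N+2S}\leq Q,
\end{equation*}
and substituting $N\gg S$ gives $\frac{N^3}{24}\leq Q$. I expect the main obstacle to be the choice and optimization of the split parameter $M$: unlike MGS, where the width was the constant $M$, here the hourglass width degenerates to zero near the end of the outer loop ($j$ close to $N$), so one cannot use a single global width. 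The correct move is to discard the late iterations by the loop split, balancing the loss of vertices (which shrinks $|V|$) against the gain in width $N-M-1$ (which shrinks $|E|$); getting this trade-off to produce the clean $\frac{N^4}{N+2S}$ form, with the factor $2$ on $S$ reflecting the optimal $K$, is the delicate bookkeeping step. The underlying structural lemmas carry over unchanged.
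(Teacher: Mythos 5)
Your proposal takes essentially the same route as the paper: the paper likewise observes that the GEHD2 hourglass width $(N-2-j)$ degenerates to $1$ near the end of the outer loop, introduces a loop split at a new parameter $M < N-2$ so that the retained first half has minimum width $N-M-1$ (applying the hourglass machinery there and the classical derivation on the dominated second half), and then instantiates the split point, choosing $M=\frac{N}{2}-1$ for the $\frac{1}{12}\frac{N^4}{N+2S}$ bound and $M=N-S-2$ for the $N\gg S$ case. The only cosmetic difference is that the paper derives $\frac{N^3}{24}\leq Q$ from the second choice of $M$ rather than by substituting $N\gg S$ into the first bound, which does not alter the argument.
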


As in Section~\ref{subsec:qr_hh_lb}, the size of the hourglass depends on the value of iteration $j$ of the outermost loop (temporal dimension), and is equal to $(N-2-j)$, where $0 \leq j < N-2$.
This means that its minimal value is $1$, which causes an issue with our proof.

A way to solve this problem is to introduce a new parameter $M < N-2$ and to consider a loop splitting of the outermost temporal dimension $j$ at iteration $M$.
First, notice that a loop splitting does not change the dependencies of a program, thus the lower bound of the split version will apply to the original one.

The first half of the split program satisfies the hourglass pattern and has a minimum size of its hourglass of $(N-M-1)$. Thus, as long as this quantity is not bounded by a constant, the hourglass reasoning should apply.
The second half does not satisfy the last condition of the hourglass pattern, so the classical derivation is used there.
Notice that its bound is asymptotically worse than the bound on the first part, so the bound of the first part will dominate.
We consider $M=\frac{N}{2}-1$ to obtain the first bound, and $M=N-S-2$ to obtain the second bound when $N \gg S$.

\section{Related work}
\label{sec:rel_work}

\paragraph{I/O complexity and automation of the proof}
The seminal work of Hong and Kung~\cite{hong-pebble} introduced the red-blue pebble game and used it as a formalism to manually prove the I/O lower bound of programs.
Following this contribution, many papers~\cite{aggarwal-sort,demmel-linearalgebra,demmel-matmult,Irony04,kwasniewski-matmult,doi:10.1137/080731992,julien-matmult,savage-fft,savage-stencil,Scott15,Bilardi18} focused on the manual proof of lower bounds for various (classes of) programs.
A large portion of these papers consider a formalism that forbid recomputation. This assumption is necessary to decompose complex CDAGs into simpler subregions, and to be able to recombine the bounds of each region.

Irony et al.~\cite{Irony04} used the Loomis-Whitney bound, a specialization of the Brascamp-Lieb theorem with canonical projections, on the \texttt{gemm} algorithm.
This was later extended by the work of Christ et al.~\cite{Christ13} for the class of affine programs, by using the Brascamp-Lieb theorem.
Then, Elango et al.~\cite{Elango15} added the idea of considering paths of dependencies, completing the $K$-partitioning method as introduced in Section~\ref{sec:background} of our paper.

This leads to the work of Olivry et al.~\cite{Olivry_pldi20} that introduced the first automatic lower bound derivation tool, based on combining bounds obtained from the $K$-partitioning and the wavefront methods of proof.
This tool includes several refinements to the bound derivation, by taking advantage of some specific properties.
For example, if the projections use disjoint parts of the inset, the derived bound could be improved by a constant factor.
A later extension~\cite{Olivry_pldi21} exploits the fact that some dimensions have very small sizes and applies this reasoning to convolutions.
%

\section{Conclusion}
\label{sec:concl}

In this paper, we introduced a new proof reasoning to derive data movement lower bounds, based on a pattern of dependencies, called the hourglass pattern.
This pattern appears on several linear algebra kernels, such as the Modified Gram-Schmidt, the QR Householder decomposition (A2V and V2Q parts), the bidiagonal matrix reduction, and the Hessenberg matrix reduction.
We managed to derive asymptotically tighter lower bounds for these kernels, compared to the previous classical methods.
We also provided tiled algorithms for MGS and QR A2V, whose amount of data movement matches asymptotically the new lower bounds, up to a constant factor.

We integrated the hourglass detection and its associated proof in the automatic data movement lower bound derivation tool of Olivry et al., IOLB~\cite{Olivry_pldi20}, whose implementation is freely available.



\paragraph{Individual contributions}
Fabrice identified the original problem on MGS and the intuition on how to attack this problem.
The implementation in IOLB and the proof were written by Guillaume. The proof was refined by Lionel, who also provided a detailed study of the bound, and an experimental performance analysis.
Julien identified the interesting kernels exhibiting an hourglass pattern and provided valuable insights on the linear algebra side. Lionel and Julien are also at the source of the upper bound proof in the appendix.

\paragraph{Acknowledgments} Julien was partially supported for this work by NSF award \#2004850.


\bibliographystyle{abbrv}
\bibliography{biblio,biblio_nla}

\newpage  

\appendix


\section{Data movement upper bound - tiled algorithms}
\label{sec:annex_ub_algo}

In this Appendix, we present the tiled algorithm for the MGS and Householder A2V kernels.
We compute the data movement of these algorithms to obtain an upper bound on the minimal amount of data movement needed, and we show how it relates to their corresponding lower bound.

\subsection{Tiled algorithm for MGS}
\label{sec:appendix_mgs_algo}

As shown in Section~\ref{sec:gen_iolb_proof}, we have an improved asymptotic lower bounds for MGS when $M \ll S$.
For reference and completeness, we show in Figure~\ref{fig:opt_code_mgs__ll} a tiled left-looking ordering, based on the ideas from~\cite[\S9.1.4]{techreport-berkeley-demmel-grigori-hoemmen-langou-2008}. Then, we provide the proof that this ordering asymptotically matches this bound.

Let us show that, if we choose the block size $B$ such that $(M+1)B < S$, then this algorithm satisfies the following properties:
\begin{enumerate}
    \item it does not spill out of memory
    \item the amount of ``read'' operations is $\frac{1}{2} \frac{ M^2  N^2 }{ S }$ (leading term).
    \item the amount of ``write'' is $MN + \frac{1}{2}N^2$ (leading terms).
    \item the total amount of I/O (read+write) is $\frac{1}{2} \frac{ M^2  N^2 }{ S }$ (leading term).
\end{enumerate}

The input of this algorithm is an $M$-by-$N$ matrix $A$. The outputs are the $M$-by-$N$ matrix $Q$ (stored in the array $A$) and the upper triangular matrix $R$.
The first for-loop (line 1) moves along the columns of $A$ by block. At step $j0$, we work with the block $A(0{:}M,j0{:}j0+B)$.
We consider the left-looking variant of the algorithm. So, at the start of the $j0$ iteration, the block $A(0{:}M,j0{:}j0+B)$ contains the initial data of the matrix $A$, and at the end of the $j0$ iteration, the same block contains the final data of the matrix $Q$.
The block $A(0{:}M,j0{:}j0+B)$ stays in memory during the $j0$ iteration, so we read it at the start of the $j0$ iteration and write it at the end of this iteration.

\begin{figure}
\lstset{
  basicstyle=\footnotesize,
  xleftmargin=.03\textwidth, xrightmargin=.05\textwidth
}
\begin{lstlisting}[style=myC]
for ($j0 = 0$; $j0 < N$; $j0\ \inc B$) {
// read A(1:M,j0:j0+B)
   for ($i = 0$; $i < j0$; $i\ \inc 1$) {
//    read A(1:M,i)
      for ($j = j0$; $((j < j0+B)\&\&(j < N))$; $j\ \inc 1$) {
         $R[i][j]$ = $0.0$; 
         for ($k = 0$; $k < M$; $k\ \inc 1$)
            $R[i][j]$ += $A[k][i]$ * $A[k][j]$;
         for ($k = 0$; $k < M$; $k\ \inc 1$)
            $A[k][j]$ -= $A[k][i]$ * $R[i][j]$;
      }
//    discard A(1:M,i)
   }
   for ($j = j0$; $((j < j0+B)\&\&(j < N))$; $j\ \inc 1$) {
      for ($i = j0$; $i < j$; $i\ \inc 1$) {
         $R[i][j]$ = $0.0$;
         for ($k = 0$; $k < M$; $k\ \inc 1$)
            $R[i][j]$ += $A[k][i]$ * $A[k][j]$;
         for ($k = 0$; $k < M$; $k\ \inc 1$)
            $A[k][j]$ -= $A[k][i]$ * $R[i][j]$;
      }
      $R[j][j]$ = $0.0$;
      for ($k = 0$; $k < M$; $k\ \inc 1$)
         $R[j][j]$ += $A[k][j]$ * $A[k][j]$;
      $R[j][j]$ = sqrt($R[j][j]$);
      for ($k = 0$; $k < M$; $k\ \inc 1$)
         $A[k][j]$ /= $R[j][j]$;
   }
// write A(1:M,j0:j0+B)
}
\end{lstlisting}
\caption{Modified Gram-Schmidt - left-looking tiled code with $\mathcal{O}\left(\frac{M^2N^2}{S}\right)$ data movements, if the block size $B$ satisfies $(M+1)B< S$.}
\label{fig:opt_code_mgs__ll}
\end{figure}

For each vector column on the left of this block of columns (the $A(0{:}M,i)$ where $i<j0$), we first need to project our block of $A$, $A(0{:}M,j0{:}j0+B)$, onto the orthogonal complement of this column.
This is done by the first loop nest (line 3), which performs this projection one column at a time.
This requires the program to load the column $A(0{:}M,i)$.
Since we are using the left-looking version of the algorithm, we only need to ``read'' $A(0{:}M,i)$ and we do not need to ``write'' $A(0{:}M,i)$ back in memory.

Now, let us count the number of I/O accesses of this program.
To fit the $(B+1)$ columns of $A$ in cache (block $A(0{:}M,j0{:}j0+B)$ and column $A(0{:}M,i)$), we assume that $B$ is chosen such that $MB + M < S$.
Over the whole course of the algorithm, moving the blocks $A(1{:}M,j0{:}j0+B)$ into cache only happens once per block, so the I/O cost for these movements is only $MN$ read and $MN$ write operations.
The main I/O cost happens when reading the columns $A(0{:}M,i)$ one at a time, inside the very first two for-loops (lines 1 and 3, on iterations $j0$ and $i$).
    
Notice that the I/O benefit of the blocked algorithm occurs here: every time we move the column $A(1{:}M,i)$ in memory, we reuse it to project over $B$ columns at once, instead of projecting over a single column in the unblocked version.
This leads to a reduction of the I/O by a factor $B$.
More precisely, the total amount of data movement involved in the reading of column $A(0{:}M,i)$ is:
$$
\begin{array}{l}
    M \times \left|\{ j0,i \mid 0\leq j0 <N \text{ and } j0 \text{ mod } B = 0 \text{ and } 0\leq i< j0 \}\right|\\
    \quad = M \times \left|\{ j0,i \mid 0\leq j0 <N/B \text{ and } 0\leq i< j0.B \}\right|\\
    \quad = M \times \sum\limits_{j0=0}^{N/B} j0 \times B ~\approx~ MB\left(\frac{N^2}{2B^2}\right) ~=~ \frac{1}{2} \frac{MN^2}{B}.
\end{array}
$$
The total data movement for the algorithm reading the columns $A(1:M,i)$, 
$\textmd{for }(i = 0; i < j0; i++)$, $\textmd{for }(j0 = 0; j0 < N; j0+=B)$ is:
$$
\begin{array}{l}
    \sum_{(j0 = 0; j0 < N; j0+=B)}  \sum_{(i = 0; i < j0; i++)} M \\
    \hspace{1cm} = M \sum_{(j0 = 0; j0 < N; j0+=B)} j0 \\
    \hspace{1cm} \approx M B \sum_{(j0 = 0; j0 < \frac{N}{B}; j0++)} j0 \\
    \hspace{1cm} \approx \frac{1}{2} M B \left(\frac{N}{B}\right)^2 = \frac{1}{2} \frac{MN^2}{B}.
\end{array}
$$

This is indeed a factor of $B$ lower than the number of reads of the unblocked algorithm.
    
Because we have chosen $B$ such that $M ( B  + 1)< S$, we can take its maximal value to minimize the number of reads: $B = \lfloor \frac{S}{M} \rfloor - 1 \approx \frac{S}{M}$.
Thus, the I/O cost of this tiled algorithm is, assuming $B \geq 1$:
$$I/O \approx \frac{1}{2} \frac{M^2N^2}{S}.$$

\paragraph{Additional remarks}
This algorithm works for any block size $B$, any $M$, any $N$ ($N < M$), and any $S$. When we make the analysis of data movement, we further assume that $M (B  + 1)< S$.
So, for the I/O analysis to be correct, we need at least $S > 2M$. In other words, we assume that at least two columns fit in cache.

Also, we do not consider $R$ in the data movement or in the cache constraint.
Once a value of $R$ is computed, this value is used only once, and right after its computation.
We can thus write these values as soon as we are done with them. Hence, (1) the total number of writes due to $R$ is $N (N+1) / 2 \approx N^2/2$.  This $N^2/2$ ``write'' I/O is negligible for the algorithm (and cannot be optimized anyway), so we will  neglect it; and (2) the requirement of cache for $R$ is $1$ for the whole computation, so we do not consider $R$ at all for the cache usage.

Note that it is also possible to follow the same reasoning to write a tiled version of the right looking variant of the algorithm.
We get a volume of I/O with a similar order of magnitude, but the constant is higher.
The right-looking variant performs more I/O than the left-looking variant, but, in addition, writes are more expensive than reads, and the I/Os of the right-looking variant are dominated by writes, whereas, the I/Os of the left-looking variant are dominated by reads. 
So, in the sequential context, we prefer to use the left-looking variant. There are some  advantages in using the right-looking variants in the sense that the updates can be done in parallel, so in a parallel multicore setting, for example, the right-looking variants would be interesting.

\subsection{Tiled algorithm for Householder A2V}
\label{sec:appendix_hh_a2v_algo}

\begin{figure}
\lstset{
  basicstyle=\footnotesize,
  xleftmargin=.05\textwidth, xrightmargin=.05\textwidth
}
\begin{lstlisting}[style=myC]
for ($k0 = 0$; $k0 < N$; $k0\ \inc B$) {
// read A(1:M,k0:k0+B)
  for($j = 0$; $j < k0$; $j\ \inc 1$){
//    read A(j:M,j)
    for ($k = k0$; $(k < k0+B)\&\&(k < N)$; $k\ \inc 1$) {
      $tmp$ = $A[j][k]$;
      for($i = j+1$; $i < M$; $i\ \inc 1$)
        $tmp$ += $A[i][j]$ * $A[i][k]$;
      $tmp$ = $tau[j]$ * $tmp$;
      $A[j][k]$ = $A[j][k]$ - $tmp$;
      for($i = j+1$; $i < M$; $i\ \inc 1$)
        $A[i][k]$ = $A[i][k]$ - $A[i][j]$ * $tmp$;
    }
//    discard A(j:M,j)
  }
  for ($k = k0$; $((k < k0+B)\&\&(k < N))$; $k\ \inc 1$) {
    for($j = k0$; $j < k$; $j\ \inc 1$) {
      $tmp$ = $A[j][k]$;
      for($i = j+1$; $i < M$; $i\ \inc 1$)
        $tmp$ += $A[i][j]$ * $A[i][k]$;
      $tmp$ = $tau[j]$ * $tmp$;
      $A[j][k]$ = $A[j][k]$ - $tmp$;
      for($i = j+1$; $i < M$; $i\ \inc 1$)
        $A[i][k]$ = $A[i][k]$ - $A[i][j]$ * $tmp$;
    }
    $norma2$ = $0.0$;
    for($i = k+1$; $i < M$; $i\ \inc 1$)
      $norma2$ += $A[i][k]$ * $A[i][k]$;
    $norma$ = $sqrt$( $A[k][k]$ * $A[k][k]$ + $norma2$ );
    $A[k][k]$ = ($A[k][k] > 0.0$)?($A[k][k]$+$norma$)
            :($A[k][k]$-$norma$);
    $tau[k]$ = $2.0$/
        ($1.0$ + $norma2$/($A[k][k]$*$A[k][k]$));
    for($i = k+1$; $i < M$; $i\ \inc 1$)
      $A[i][k]$ /= $A[k][k]$;
    $A[k][k]$ = ($A[k][k]$ > $0.0$)?(-$norma$):($norma$);
  }
// write A(1:M,k0:k0+B)
}
\end{lstlisting}
\caption{QR Householder A2V part - Tiled implementation, with $\mathcal{O}\left(\frac{M^2N^2}{S}\right)$ data movements,
assuming $2M < S$.}
\label{fig:opt_code_hh__ll}
\end{figure}

Figure~\ref{fig:opt_code_hh__ll} shows an implementation that matches asymptotically this bound. It is based on the same idea as the tiled algorithm for MGS in Section~\ref{sec:appendix_mgs_algo}, but adapted to the A2V algorithm.

In particular, we show that, if we chose $B$ such that $(M+1)B < S$, then our algorithm satisfies the following properties:
\begin{enumerate}
    \item it does not spill out of memory.
    \item the amount of ``read'' is $\frac12 \frac{M^2N^2 -MN^3/3}{S}$.
    \item the amount of ``write'' is $MN$.
    \item the total amount of I/O (read+write) is $\frac12 \frac{M^2N^2 - MN^3/3}{S}$.
\end{enumerate}
We have only show the leading terms for these previous I/O costs.

The input of our algorithm is the $M$-by-$N$ matrix $A$. The outputs are the $N$-by-$N$ upper triangular matrix $R$ and the  $M$-by-$N$ unit lower triangular matrix $V$, that are computed in place in the $M$-by-$N$ array $A$.
The first for-loop (line 1), using iteration $k0$, moves along the column of $A$ by block.
At step $k0$, we work on the block $A(0{:}M,k0:k0+B)$.
We consider the left-looking variant of the algorithm.
So, at the start of the $k0$ iteration, the block $A(0{:}M,k0:k0+B)$ contains the initial data of the matrix $A$; and at the end of the $k0$ iteration, 
it contains the final data ($V$ and $R$).
This block remains in cache during the $k0$ iteration, so we read it at the start of the $k0$ iteration, and we write it at the end of this iteration.

For each vector column on the left of this block (the $A(0{:}M,j)$ where $j<k0$), we need to reflect our current block $A(j{:}M,k0{:}k0+B)$ with the elementary Householder reflectors defined by $A(j+1{:}M,j)$
and $tau[j]$, one column at a time.
This requires the algorithm to load $A(j+1{:}M,j)$.
Since we are using the left-looking version of the algorithm, we do not need to ``write'' it back in memory.

Now, let us count the number of I/O accesses of this program.
To fit the $(B+1)$ columns of $A$ in cache, we assume that $B$ is chosen such that $M(B + 1) < S$.
Over the whole course of the algorithm, moving the blocks $A(1{:}M,k0{:}k0+B)$ in cache happens only once per block, so the I/O for this is $2MN$ and is negligible.
The main I/O cost happens when reading the columns $A(j+1{:}M,j)$, one at a time, inside the two very first for-loops (using iteration $k0$ and $j$).

Notice that the I/O gain of this blocked algorithm against the unblocked version occurs there:
every time we load a column $A(j+1{:}M,j)$, we reuse it to reflect over $B$ columns at once, instead of a single column for the unblocked version.
This leads to a reduction of the I/O cost by a factor of $B$.
More precisely, the total reads of columns $A(j+1{:}M,j)$ in the algorithm is:
$$
\begin{array}{l}
\sum_{k0 = 0,~ k0+=B}^{N-1} \sum_{j = 0}^{k0-1} (M - j) ~\approx~ \sum_{k0 = 0}^{N/B} \sum_{j = 0}^{B.k0} (M - j)\\
\quad \quad \approx~ \sum_{k0 = 0}^{N/B} \left( B.k0.M - \frac{(B.k0)^2}{2} \right) \\
\quad \quad =~ B.M. \left(\sum_{k0 = 0}^{N/B} k0\right) - \frac{B^2}{2} . \left(\sum_{k0 = 0}^{N/B} k0^2\right)\\
\quad \quad \approx~ B.M.\frac{N^2}{2.B^2} - \frac{B^2}{2} . \frac{N^3}{3B^3} ~=~ \frac{MN^2}{2B} - \frac{N^3}{6B}\\
\quad \quad \approx~ \frac12 \frac{MN^2 - N^3/3}{B}.
\end{array}
$$
This is indeed a factor of $B$ less than the unblocked algorithm.

Because we require $B$ to satisfy $ M ( B  + 1)< S$, we choose $B = \lfloor \frac{S}{M} \rfloor - 1 \approx \frac{S}{M}$.
So the I/O cost of our algorithm is, assuming $B \geq 1$,
$$I/O \approx \frac12 \frac{M^2N^2 - N^3/3}{S}.$$

\paragraph{Additional remark}
Just like for the MGS algorithm, the I/O analysis requires $S>2M$, which means that we assume that at least two columns fit in cache.

\end{document}